\documentclass{article}
\usepackage[american]{babel}
\usepackage{natbib} 
\usepackage{amsfonts,dsfont,bm,amsthm}
\usepackage{amsmath, amssymb, graphicx, url, algorithm2e}
\usepackage{mathtools, subfigure} 
\usepackage{booktabs} 
\usepackage{multirow} 
\usepackage{wrapfig}
\usepackage{thmtools, thm-restate}
\usepackage{amsfonts}
\usepackage{tikz} 
 \usepackage{pgfplots}
\usetikzlibrary{calc, trees, positioning, arrows, fit, shapes}
\usepackage{thm-restate} 
\usepackage{tikz}
\usetikzlibrary{arrows.meta}
\usepackage{subcaption}
\usepackage{booktabs}
\usepackage{caption}
\PassOptionsToPackage{hyphens}{url}
 \usepackage{hyperref}
\usepackage{multicol}
\usepackage{fullpage}
\usepackage{authblk}

 \newtheorem{assumption}{Assumption}
 \newtheorem{example}{Example}
 \newtheorem{definition}{Definition}
\newtheorem{lemma}{Lemma}

\definecolor{forestgreen}{RGB}{34,139,34}     
\definecolor{softgreen}{RGB}{144,238,144} 
\definecolor{darkgreen}{RGB}{0,100,0}     

\title{ Constraint-based difference graph discovery
in a linear setting}
\author[1]{\href{mailto:<daria.bystrova@univ-grenoble-alpes.fr>?Subject=Your UAI 2024 paper}{Daria Bystrova}{}}
\author[1]{Emilie Devijver}
\affil[1]{%
 Univ. Grenoble Alpes, CNRS, Grenoble INP, LIG, F38000, Grenoble, France\\
}
\date{}

\begin{document}

\maketitle

\begin{abstract}%
Comparing causal relationships across populations is essential in many scientific domains. This paper studies the problem of inferring a difference graph between two environments and proposes a causal discovery method for linear structural causal models based on equality tests of regression coefficients. 
We show that invariance of regression coefficients is governed by graphical conditions that go beyond standard d-separation. Therefore, we introduce \emph{diff-separation}, a graphical criterion that characterizes when a conditioning set blocks all paths capable of inducing differences in regression coefficients across environments.
Building on this criterion, 
we introduce a corresponding \emph{diff-faithfulness} assumption, linking graphical diff-separation statements to equality constraints on regression coefficients. Finally, we propose \textsc{LDiffPC}, a PC-style algorithm that uses equality tests of regression coefficients to recover the differences from multi-environment data.
\end{abstract}


\section{Introduction}

Understanding causal relationships among variables is essential in many domains, including medicine and ecology~\citep{belyaeva2021dci, pellissier2018comparing}. Beyond identifying causal structures, researchers often need to compare these relationships across different groups or environments to uncover meaningful variations. Such differences can have critical implications: in medicine, variations in causal mechanisms between patient subgroups may inform personalized treatments and improve patient outcomes; in ecology, shifts in causal interactions across ecosystems may indicate changes in resilience or vulnerability.

To make this concrete, consider a structural causal model (SCM) describing the relationship between a treatment $X$, a biomarker $Z$, and a health outcome $Y$, where $X$ affects $Z$ and $Z$ affects $Y$. Suppose that across two patient populations, the effect of $X$ on $Z$ changes due to differences in treatment response, while the rest of the system remains unchanged. In this case, the underlying causal structure is largely shared, but one mechanism differs. The goal is not to relearn the full causal graph in each environment, but rather to identify and characterize such differences. This motivates the notion of a \emph{difference graph}~\citep{wang2018direct, assaad2024causal}, which encodes changes in causal mechanisms across environments.

Recently, the problem of directly learning difference graphs from data has received increasing attention~\citep{wang2018direct, chen2024iscan, malik_identifying_2024}. However, existing methods are typically not formulated within the classical constraint-based framework of causal discovery, and lack a systematic graphical characterization of the conditions under which differences can be identified. In particular, there is currently no analogue of d-separation that explains when equality (or inequality) of statistical quantities across environments reflects underlying causal changes.

In this work, we address this gap by revisiting the problem through the lens of constraint-based causal discovery. We focus on linear SCMs and propose to use regression coefficients for testing the equality of direct effects across environments. We show that classical criteria based on conditional independence or mutual information are not well suited for this task, as they do not isolate changes in specific structural relationships.

Our main contribution is a graphical characterization of when regression coefficients remain invariant across environments. We show that this invariance depends on conditions that go beyond standard d-separation. In particular, paths that do not contain any changed edge may still induce differences. To capture these paths, we introduce a new graphical criterion, \emph{diff-separation}, which characterizes when a conditioning set blocks all paths that can induce differences in regression coefficients across environments.
Building on this notion, 
we  introduce a corresponding \emph{diff-faithfulness} assumption, linking graphical diff-separation to equality constraints on regression coefficients. Finally, we propose \textsc{LDiffPC}, a constraint-based algorithm that leverages equality tests of regression coefficients to recover the differences from multi-environment data.

The remainder of the paper is organized as follows. 
In Section~\ref{sec:sota}, we discuss related works.
In Section~\ref{sec:notions}, we review background on SCMs and causal discovery. In Section~\ref{sec:graph}, we introduce diff-separation and study its properties. 
In Section~\ref{sec:method}, we introduce the \textsc{LDiffPC} algorithm. Finally, Section~\ref{sec:conc} concludes the paper. All proofs are deferred to the appendix.

\section{Related work}\label{sec:sota}

When the causal graph is unknown, causal discovery methods aim to infer causal structures from observational data. One of the most well-known approaches is the PC algorithm \citep{Spirtes_2000} which assumes Faithfulness and Causal Sufficiency \citep{Spirtes_2000}, does not impose parametric assumptions on the underlying distribution.

Networks are widely used to represent relationships between components in complex systems. In many applications, researchers seek to understand how these dependency structures change across different environments. For instance, ecological studies have examined how species interaction networks evolve along elevation gradients~\citep{pellissier2018comparing, MontanoCentellas2020}. Similarly, in genomics, variations in gene expression networks between individuals who develop certain conditions and those who do not can provide critical insights into disease mechanisms~\citep{belyaeva2021dci}.

To address these challenges, difference network analysis methods have been developed for various types of graphical models. \cite{shojaie_differential_2021} provide a comprehensive review of existing approaches, which differ based on the type of network considered and the specific measure of difference employed. 
For Gaussian Graphical models, multiple methods developed for comparing covariance (or precision) matrices \citep{chen2015selection,chang2017comparing}.  \citep{guo_joint_2011} proposed joint modeling of graphical models corresponding to different environments by taking into account shared common structure. 
Alternatively, direct estimation of difference graphs has been proposed to infer structural changes without requiring the separate estimation of each graph. In the context of Markov networks, \cite{liu_direct_2013} and \cite{zhao_direct_2014} introduced methods that estimate differences in conditional independence structures directly from the data.

For Directed Acyclic Graphs (DAGs), \cite{wang2018direct} proposed the DCI algorithm, which identifies structural differences between two causal graphs by leveraging invariance tests. More recently, \cite{chen2024iscan} introduced iSCAN, a { semi-parametric approach that firstly infers the shifted nodes and then using a feature ordering method based on conditional independence recover  the difference DAG.} 
Both methods assume causal sufficiency, meaning that all relevant confounders are observed, and rely on parametric or {semi-parametric assumptions} about the underlying data distributions, such as linearity and Gaussianity or 
{additive noise model}. They further require that the two graphs share the same topological order, meaning that variables appear in a consistent causal sequence across environments. 
None of these methods fully adopts a PC-style strategy.

The problem of comparing difference graph is closely related to causal discovery from multiple environments. A substantial body of work has focused on identifying causal relationships that remain invariant across environments~\citep{wang2018direct}, exploiting distributional changes to recover stable causal structures. In contrast, we consider the complementary problem of identifying where and how causal mechanisms differ. While both perspectives rely on heterogeneity in the data, they address fundamentally different questions: invariant causal discovery seeks common structure, whereas difference-based approaches aim to characterize changes in mechanisms.

Recently, it has been demonstrated that difference graphs can be a valuable tool for causal reasoning~\citep{assaad2024causal}, enabling analysts to directly identify and estimate causal changes and effects from observational data. Importantly, this approach does not require parametric assumptions, making it a flexible and robust method for causal inference. However, when using difference graphs for causal estimation, it is crucial to avoid inheriting parametric assumptions from the causal discovery methods used to construct the difference graph, as this could compromise the assumption-free nature of the approach. 

\section{Preliminaries over Difference DAGs}\label{sec:notions}

\subsection{Underlying main SCM $\mathcal{M}^*$}
We consider an underlying \emph{structural causal model} (SCM)~\citep{Pearl_2000} denoted by $\mathcal{M}^*$. Formally, SCM is represented by a tuple $(\mathbb{U},\mathbb{V},\mathbb{F},\mathbb{P}^* )$ , where $\mathbb{U}$ is of a set of hidden (noise) variables  with an associated probability distribution, $\mathbb{V} = \{ X_1, \ldots, X_p \}$ is a set of observed random variables, and $\mathbb{F}$ is a set of causal mechanisms, where each $f_i \in \mathbb{F}$ generates $X_i$ from a subset of $\mathbb{V} \cup \mathbb{U}$. We assume that $\mathcal{M}^*$ induces a joint distribution $\mathbb{P}^*$ and a directed acyclic graph (DAG) $\mathcal{G}^* = (\mathbb{V}, \mathbb{E}^*)$, where the nodes in $\mathbb{V}$ represent observed variables, and the edges $\mathbb{E}^*$ represent causal relationships. We assume \emph{causal sufficiency}, i.e, all common causes of the observed variables are also included in the set. For any $X_i \in \mathbb{V}$, we denote by $Pa_{\mathcal{G}^*}(X_i)$ the set of parents of $X_i$ in $\mathcal{G}^*$.

An \emph{intervention} in $\mathcal{M}^*$, denoted $I = (X_i, \tilde{f}_i)$, replaces the causal mechanism $f_i$ of $X_i$ with a new mechanism $\tilde{f}_i$. This intervention modifies the conditional distribution $P(X|Pa_{G^*}(X))$ to a new distribution, $\tilde{P}(X|Pa_{G^*}(X))$. We assume that an intervention on $X_i$ can only modify the existing relationships between $X_i$ and its parents or reduce the size of $Pa_{\mathcal{G}^*}(X_i)$, which is formalized in the following assumption

\begin{assumption}[Shared topological ordering]
Let $\mathcal{M}^*$  $(\mathbb{U},\mathbb{V},\mathbb{F},\mathbb{P}^* )$ be an SCM. If $\mathbb{I}$ is a set of interventions in an SCM $\mathcal{M}^*$ and $\mathcal{M}^i$ is the corresponding SCM with the intervention distribution $\mathbb{P}^*$ and graph $G^{i}$, then for each $ I \in \mathbb{I}$, such $I = (X_i, \tilde{f}_i)$ and  $X_i = \tilde{f}_i(Pa_{\mathcal{G}^i}(X_i)$, where  $ Pa_{\mathcal{G}^i}(X_i) \subset Pa_{\mathcal{G}^*}(X_i)$
\end{assumption}

An \emph{environment} $e$ is defined as an SCM $\mathcal{M}^e$ that induces a DAG $\mathcal{G}^e = (\mathbb{V}, \mathbb{E}^e)$ and a joint distribution $\mathbb{P}^e$, which is determined by a set of interventions $\mathbb{I}^e$ applied to the original model $\mathcal{M}^*$.
We emphasize that our framework distinguishes between hard and soft interventions~\citep{Eberhardt_2007}. Under a hard intervention, the post-intervention distribution $\tilde{P}$ becomes degenerate at the imposed value, thereby severing the dependence between $X_i$ and all of its parents. By contrast, a soft intervention modifies the conditional mechanism of $X_i$ without necessarily removing all parental influences; it may affect only some parent-child relationships while leaving others unchanged.

Following the approach of \cite{Pearl_2000}, we assume that each causal mechanism $f_i$ is modular: when an intervention targets a node $X_j$, it only affects the mechanism $\mathbb{P}^e(X_j | \text{Pa}_{\mathcal{G}^*}(X_j))$ corresponding to $X_j$, while the mechanisms for other nodes, $\mathbb{P}^e(X_i | \text{Pa}_{\mathcal{G}^*}(X_i))$ for $i \neq j$, remain unchanged.

We also rely on standard concepts in DAGs. A path $\pi$ in a DAG is said to contain a \emph{V-structure} at node $X_j$ if the path includes $X_i \rightarrow X_j \leftarrow X_k$, and $X_j$ is called a \emph{collider}. The path $\pi$ is \emph{blocked} by a subset of vertices $\mathbb{Z}$ if a non-collider in $\pi$ belongs to $\mathbb{Z}$, or if $\pi$ contains a collider whose descendants do not belong to $\mathbb{Z}$. Otherwise, $\mathbb{Z}$ \emph{activates} $\pi$. If all paths between two nodes $X_i$ and $X_j$ are blocked by $\mathbb{Z}$, then $X_i$ and $X_j$ are \emph{d-separated} by $\mathbb{Z}$. 

\subsection{Difference DAG over two populations}
We consider that the underlying structural causal model $\mathcal{M}^*$, the causal graph $\mathcal{G}^*$, and the joint distribution $\mathbb{P}^*$ are unknown.
In this work, we assume there are exactly two environments, $e^1$ and $e^2$, with joint distributions $\mathbb{P}^1$ and $\mathbb{P}^2$, resulting from interventions on $\mathcal{M}^*$. We observe two datasets, $(\mathbf{x}^1_1, \dots, \mathbf{x}^1_{n_1})$ and $(\mathbf{x}^2_1, \dots, \mathbf{x}^2_{n_2})$, where each data point $\mathbf{x}^\ell_i \in \mathbb{R}^p$ (for $i \in \{1, \dots, n_\ell\}$ and $\ell \in \{1, 2\}$) is an observation derived from environment $e^\ell$.

As an intervention on $X_i$ cannot introduce new parents in $Pa_{\mathcal{G}^*}(X_i)$, any two graphs induced by any two environments of $\mathcal{M}^*$ share a common topological ordering, as previously utilized in \cite{wang2018direct, malik_identifying_2024}.  Consequently, no edge reversals are allowed between the two graphs, which is often a reasonable assumption in many practical applications \citep{belyaeva2021dci}. 

We aim to understand how causal relationships among variables differ between these two environments. To achieve this, we define the difference DAG~\citep{wang2018direct,assaad2024causal}, which captures causal changes between the observed environments. This graph, denoted as $\mathcal{D} = (\mathbb{V}, \mathbb{E})$, is formally defined as follows:

\begin{definition}[Difference DAG]\label{def:d_DAG} 
Let $\mathcal{M}^*$ be an SCM inducing a DAG $\mathcal{G}^*$. Consider two environments $e^1$ and $e^2$, where the respective SCMs $\mathcal{M}^1$ and $\mathcal{M}^2$ are obtained by applying interventions to $\mathcal{M}^*$.
A graph $\mathcal{D} = (\mathbb{V},\mathbb{E})$, with a set of vertices $\mathbb{V}=\{X_1,\ldots, X_p\}$, is called a \emph{difference DAG} of $\mathcal{M}_1$ and $\mathcal{M}_2$ if for each edge $X_i \rightarrow X_j$ in $\mathbb{E}$, $X_i\in Pa_{\mathcal{G}^*}(X_j)$ and the direct effect of $X_i$ on $X_j$ differs between $\mathcal{M}^1$ and $\mathcal{M}^2$.
\end{definition}

\begin{figure}[!t]
\centering
\begin{tabular}{p{5cm} p{3.4cm} p{5cm}}
\centering
\begin{tikzpicture}[
 xscale=0.55, >=Stealth,
  node/.style={circle, draw, minimum size=5mm, inner sep=0pt},
  every edge/.style={draw, ->},
    annotation/.style={
        font=\footnotesize,
        text=black!70,
        align=center
    }
]

\node[node] (A) at (0,6)  {X};

\node[node] (B) at (-2,5) {Y};
\node[node] (C) at ( 2,5) {Z};
\node[node] (D) at (0,4)  {W};

\node[node] (L1) at (-4,2) {$X$};
\node[node] (R1) at ( 4,2) {$X$};

\node[node] (L2) at (-5,1) {$Y$};
\node[node, fill=green!30] (L3) at (-3,1) {$Z$};
\node[node, ,  fill=red!30] (R2) at ( 3,1) {$Y$};
\node[node] (R3) at (5,1) {$Z$};

\node[node] (L4) at (-4,0) {$W$};
\node[node] (R4) at ( 4,0) {$W$};
\node[annotation] at (1,6.35)
    {$\mathcal{G}^{*}$};

\draw[->]  (A) -- (B);
\draw[->]  (A) -- (C);

\draw[->]  (B) -- (D);
\draw[->]  (C) -- (D);


\draw[->, thick] (-1.5,4)-- ++(-2,-1.5) node[midway, above] { $e^1$};
\draw[->, thick] (1.5,4)-- ++(2,-1.5) node[midway, above] { $e^2$};

\draw[->]  (L1) -- (L2);
\draw[->, green]  (L1) -- (L3);
\draw[->]  (L2) -- (L4);
\draw[->]  (L3) -- (L4);
\node[annotation] at (-3.3,-0.9)
    {$\mathcal{G}^1$};

\draw[->]  (R1) -- (R3);
\draw[->]  (R2) -- (R4);
\draw[->]  (R3) -- (R4);
\node[annotation] at (3.3,-0.9)
    {$\mathcal{G}^2$};

\end{tikzpicture}
&
$\begin{array}{c}
         \xrightarrow{\text{``Difference in causal relationships''}}  \\
     
    \end{array}$
&
\hspace*{1.2cm}%
\begin{tikzpicture}[{black, circle, draw, inner sep=0, font=\footnotesize, scale=1.2,
   annotation/.style={
        font=\footnotesize,
        text=black!70,
        align=center
    }}]
\tikzset{nodes={draw,rounded corners, thick},minimum height=0.6cm,minimum width=0.6cm}
		\node (X) at (0.75,2) {${X}$};
		\node (Y) at (0,0.67) {${Y}$} ;
		\node (Z) at (1.5,0.67) {${Z}$};
		\node (W) at (0.75,-0.66) {${W}$};
      \draw[->,>=latex] (X)  to (Z);
    	\draw[->,>=latex] (X) -- (Y);

\node[draw=none] at (0.75,-1.25) {$\mathcal{D}$};

    \end{tikzpicture}
\end{tabular}
\caption{Illustration of two environments derived from a common underlying DAG $\mathcal{G}^{*}$. A parametric (soft) intervention modifies the direct effect in one environment (green color), while a structural (hard) intervention alters the graph structure in the other (red color). The corresponding difference graph $\mathcal{D}$ is also shown, illustrating the object of interest by highlighting edges whose causal mechanisms differ across environments.}
\label{fig:dif-DAG}
\end{figure}

In a difference DAG, edge orientations are inherited from the underlying SCM $\mathcal{M}^*$, thereby preserving causal directionality across environments. Unlike in classical DAGs, the absence of an edge does not imply the absence of a causal relationship, but rather that the relationship is invariant across environments. Consequently, the difference DAG provides a compact representation of causal changes and serves as a basis for subsequent analysis.
Figure~\ref{fig:dif-DAG} illustrates a difference DAG $\mathcal{D}$. In this example, a soft intervention is applied to node $Z$ in environment $e1$, yielding the graph $G_1$, whereas a hard intervention is applied to $Y$ in environment  $e^2$ yielding the graph $G_2$.

\section{Difference DAG and (conditional) equalities} \label{sec:graph}

\subsection{Conditional equalities}

The goal of this paper is to develop an algorithm that detects the difference graph, or parts of it, from data using a strategy inspired by constraint-based causal discovery methods. Classical constraint-based methods rely on (conditional) independence tests derived from dependence measures to infer the underlying causal graph. A natural question is therefore whether similar dependence measures can be used, through equality tests across environments, to recover the difference graph.
For example, a common way to measure dependence between variables is through conditional mutual information \citep{thomas2006elements}.
One might think that changes  in dependence structure can be assessed by testing whether the conditional mutual information remains equal across the two environments:
$$I_{\mathbb{P}^1}(X_i, X_j | \mathbb{Z}) = I_{\mathbb{P}^2}(X_i, X_j | \mathbb{Z}),$$
where $I_{\mathbb{P}^1}$ represents the conditional mutual information  in the environment corresponding to an SCM $\mathcal{M}^{1}$ with probability distribution $\mathbb{P}^1$ and  $I_{\mathbb{P}^2}$ is the conditional mutual information in the environment associated with SCM $\mathcal{M}^{2}$ with probability distribution $\mathbb{P}^2$.  
However, the equality of the condition of mutual information does not necessarily indicate that the direct effect remains unchanged between two environments. 
Consider the causal structure
$X \rightarrow Y \leftarrow Z$,
where $X$ and $Z$ are independent and there are no unmeasured confounders. Suppose that the direct effect of $X$ on $Y$ remains unchanged across environments, while the effect of $Z$ on $Y$ varies. For instance, consider that in  $e^1$ we have
$
Y = a X + b Z + \epsilon_Y,
$
and in  $e^2$
$Y = a X + b' Z + \varepsilon_Y$,
such that $b' \neq b$ and suppose that all variables are Gaussian.
In this setting, the edge $X \rightarrow Y$ should not appear in the difference graph, since its corresponding causal coefficient remains invariant and equal to $a$ in both environments.
However, the marginal dependence between $X$ and $Y$, for instance measured by the mutual information $I(X;Y)$, may change across environments. For jointly Gaussian variables,
$$I(X;Y)
=
-\frac{1}{2}
\log\left(
1-
\frac{a^2Var(X)}
{a^2Var(X) + b^2Var(Z) + Var(\varepsilon)}
\right).
$$
Since this expression depends on \(b\), changing \(b\) to \(b'\) generally
changes the mutual information:
$I_{\mathbb{P}^1}(X;Y) \neq I_{\mathbb{P}^2}(X;Y)$.
The same problem also appear when usual simpler dependency measures such as (partial correlation). In the same example
$$\rho_{XY}
=
\frac{a\sqrt{Var(X)}}
{\sqrt{a^2Var(X)+b^2Var(Z)+Var(\varepsilon_Y)}}, 
$$
which depends on $b$, and may change across environments
even though the direct effect $a$ of $X$ on $Y$ is invariant.

To overcome this issue, it is necessary to directly directly assess changes in direct effects. In nonparametric settings, defining direct effects can be challenging: they may rely on interventional quantities such as controlled direct effects, which depend on the values of other variables, or on counterfactual definitions, which introduce additional complexity. To avoid these difficulties, we restrict our attention to linear SCMs, in which direct effects are uniquely defined by the structural coefficients and can be directly compared across environments.

\begin{assumption}
    The SCM $\mathcal{M}^*$ is assumed to be linear.
\end{assumption}
In a linear SCM, each causal mechanism is is defined by a coefficients that represent direct effects.
In the following, we will use an regression coefficient to asses where the direct effect changes between the environments: 
$$ {r_{XY.Z}}_{\mathbb{P}^1} ={r_{XY.Z}}_{\mathbb{P}^2}. $$

\subsection{Diff-separation}

We introduce diff-separation to establish a connection between measures of independence and changes in causal mechanisms. In the structural causal model (SCM) framework, d-separation is a fundamental tool used to assess conditional independence in graphs. Specifically, two nodes 
$X$ and $Y$ are d-separated if every path between them is blocked.
In the context of difference DAGs, however, the situation is different. A difference DAG contains only edges that represent changes in causal relationships across environments. Therefore, the absence of an active path between two nodes in the difference DAG does not imply that the nodes are independent, and classical d-separation is not directly applicable.
Instead, we rely on d-separation in the corresponding DAGs of each environment to evaluate dependencies reflected in the difference DAG. 

We illustrate why d-separation is not directly translatable to difference graphs using the following example. Suppose that the underlying SCM is $M^*$, with corresponding graph $G^*$ shown in Figure~\ref{fig:dif-sep}(a), namely $X \rightarrow Z \rightarrow Y$.
Assume that there is no intervention in environment  $e^2$, while in environment  $e^1$, there is only a structural intervention on $Y$. Then the difference graph $\mathcal{D}$ contains only the edge $Z \rightarrow Y$. In the corresponding difference DAG $D$, the nodes $X$ and $Y$ are d-separated given $\emptyset$.
We aim to relate the d-separation and the adjacency in the difference DAG, which is defined through the equality of the regression coefficients in the corresponding environments.  In this example,
$
{r_{XY.\emptyset}}_{\mathbb{P}^1} \neq {r_{XY.\emptyset}}_{\mathbb{P}^2}.
$
Indeed, in environment  $e^1$, we have
$
{r_{XY.\emptyset}}_{\mathbb{P}^1}=0,
$
whereas in environment  $e^2$, the regression coefficient becomes zero only after conditioning on $Z$:
$
{r_{XY.Z}}_{\mathbb{P}^2}=0.
$
Thus, rather than defining d-separation directly in the difference DAG, we should define it through d-separation in the corresponding environment-specific graphs.

A second issue is illustrated in Figure~\ref{fig:dif-sep}(b). We again start from the same SCM, but now consider two structural interventions: one on $Y$ in environment  $e^1$ and one on $Z$ in environment  $e^2$. In this case, the difference DAG contains both edges
$
X \rightarrow Z$ and
$Z \rightarrow Y.
$
If d-separation is defined only through the graphs $G_1$ and $G_2$, then $X$ and $Y$ are d-separated by the empty set. Moreover,
$
{r_{XY.\emptyset}}_{\mathbb{P}^1}
=
{r_{XY.\emptyset}}_{\mathbb{P}^2}.
$
However, this conflicts with the d-separation structure of the difference DAG $D$, which would instead suggest that $X$ and $Y$ are d-separated only after conditioning on $Z$.

Moreover, since we are interested in changes in dependence between nodes, we restrict attention to paths that involve interventions. Unlike classical DAGs, where all paths may be relevant, we focus specifically on those paths that include edges affected by interventions. Indeed, active paths that do not contain any edge affected by the intervention can be ignored, provided their associated path coefficients and relevant covariance structure remain unchanged, since they contribute identically before and after the intervention and therefore cancel out in the difference of regression coefficients.

\begin{figure}[t]
\centering

\begin{tikzpicture}[
    x=1cm,
    y=1cm,
    >=Stealth,
    node/.style={
        circle,
        draw,
        minimum size=5mm,
        inner sep=0pt,
        font=\footnotesize
    },
    edge/.style={
        ->,
        thick
    },
    graphlabel/.style={
        draw=none,
        font=\footnotesize,
        inner sep=1pt
    },
    panellabel/.style={
        draw=none,
        font=\small
    }
]

\begin{scope}[shift={(0,0)}]

\node[node] (aXs) at (0,0)       {$X$};
\node[node] (aZs) at (0.75,1.2)  {$Z$};
\node[node] (aYs) at (1.5,0)     {$Y$};

\draw[edge] (aXs) -- (aZs);
\draw[edge] (aZs) -- (aYs);

\node[graphlabel] at (0.75,-0.55)
    {$\mathcal{G}^{*}$};

\node[node] (aX1) at (-2.0,-2.4)  {$X$};
\node[node] (aZ1) at (-1.25,-1.2) {$Z$};
\node[node] (aY1) at (-0.5,-2.4)  {$Y$};

\draw[edge] (aX1) -- (aZ1);

\node[graphlabel] at (-1.25,-2.95)
    {$\mathcal{G}_{1}$};

\node[node] (aX2) at (2.0,-2.4)   {$X$};
\node[node] (aZ2) at (2.75,-1.2)  {$Z$};
\node[node] (aY2) at (3.5,-2.4)   {$Y$};

\draw[edge] (aX2) -- (aZ2);
\draw[edge] (aZ2) -- (aY2);

\node[graphlabel] at (2.75,-2.95)
    {$\mathcal{G}_{2}$};

\node[node] (aXD) at (0,-4.7)      {$X$};
\node[node] (aZD) at (0.75,-3.5)   {$Z$};
\node[node] (aYD) at (1.5,-4.7)    {$Y$};

\draw[edge] (aZD) -- (aYD);

\node[graphlabel] at (0.75,-5.25)
    {$\mathcal{D}$};

\node[panellabel] at (0.75,-5.9)
    {(a)};

\end{scope}

\begin{scope}[shift={(8.0,0)}]

\node[node] (bXs) at (0,0)       {$X$};
\node[node] (bZs) at (0.75,1.2)  {$Z$};
\node[node] (bYs) at (1.5,0)     {$Y$};

\draw[edge] (bXs) -- (bZs);
\draw[edge] (bZs) -- (bYs);

\node[graphlabel] at (0.75,-0.55)
    {$\mathcal{G}^{*}$};

\node[node] (bX1) at (-2.0,-2.4)  {$X$};
\node[node] (bZ1) at (-1.25,-1.2) {$Z$};
\node[node] (bY1) at (-0.5,-2.4)  {$Y$};

\draw[edge] (bX1) -- (bZ1);

\node[graphlabel] at (-1.25,-2.95)
    {$\mathcal{G}^{1}$};

\node[node] (bX2) at (2.0,-2.4)   {$X$};
\node[node] (bZ2) at (2.75,-1.2)  {$Z$};
\node[node] (bY2) at (3.5,-2.4)   {$Y$};

\draw[edge] (bZ2) -- (bY2);

\node[graphlabel] at (2.75,-2.95)
    {$\mathcal{G}^{2}$};

\node[node] (bXD) at (0,-4.7)      {$X$};
\node[node] (bZD) at (0.75,-3.5)   {$Z$};
\node[node] (bYD) at (1.5,-4.7)    {$Y$};

\draw[edge] (bXD) -- (bZD);
\draw[edge] (bZD) -- (bYD);

\node[graphlabel] at (0.75,-5.25)
    {$\mathcal{D}$};

\node[panellabel] at (0.75,-5.9)
    {(b)};

\end{scope}

\end{tikzpicture}

\caption{Illustration of diff-separation.}
\label{fig:dif-sep}
\end{figure}

\begin{definition}[Diff-relevant path]
A path in $\mathcal{G}^1 \cup \mathcal{G}^2$ is \emph{diff-relevant} if :
\begin{itemize}
    \item it contains at least one edge in the difference DAG $\mathcal{D}$, or 
    \item  it contains a non-endpoint node that has an ancestor with at least one incoming edge in $\mathcal{D}$.
\end{itemize}
\end{definition}

\begin{definition}[Conditionally diff-relevant path]
Given a conditioning set $\mathbb{Z}$, a path $\pi$ between $X_i$ and $X_j$ in
$\mathcal{G}^1\cup\mathcal{G}^2$ is \emph{conditionally diff-relevant relative to $\mathbb{Z}$} if $\pi$ is not diff-relevant and at least one of the following conditions holds:
\begin{enumerate}
    \item $\pi$ contains a collider having a descendant in $\mathbb{Z}$ with at least one incoming edge in $\mathcal{D}$;
    \item $\pi$ contains a non-collider node that is connected, through an active path given $\mathbb{Z}$, to a node in $\mathbb{Z}\setminus \pi$ with at least one incoming edge in $\mathcal{D}$;
    \item one endpoint of $\pi$ has an incoming edge along $\pi$ and has a descendant in $\mathbb{Z}$ with at least one incoming edge in $\mathcal{D}$.
\end{enumerate}
\end{definition}

\begin{definition}[Diff-separation]
\label{def:diff_sep}
Let $\mathcal{M}^*$ be an SCM. Consider two environments $e^1$ and $e^2$, where the respective SCMs $\mathcal{M}^1$ and $\mathcal{M}^2$ are obtained by applying interventions to $\mathcal{M}^*$.
We denote $\mathcal{G}^1$ and $\mathcal{G}^2$ the corresponding causal graphs, and $\mathcal{D}$ the corresponding difference DAG. 
A set $\mathbb{Z}$ diff-separates $X_i$ from $X_j$ if:
\begin{itemize}
    \item it blocks all difference-relevant paths between $X_i$ and $X_j$;
    \item it blocks all conditionally diff-relevant paths (relative to $\mathbb{Z}$) between $X_i$ and $X_j$.
\end{itemize}
\end{definition}

In the following, we present several propositions that characterize important properties of diff-separation.
The first proposition establishes a direct link between diff-separation and the structure of the difference DAG.

\begin{restatable}{proposition}{diff}
\label{lemma:diff-set_parents}
 Let $\mathcal{M}^*$ be an SCM. Consider two environments $e^1$ and $e^2$, where the respective SCMs $\mathcal{M}^1$ and $\mathcal{M}^2$ are obtained by applying interventions to $\mathcal{M}^*$.
We denote  $\mathcal{G}^1$ and $\mathcal{G}^2$ the corresponding causal graphs, and 
$\mathcal{D}=(\mathbb{V}, \mathbb{E})$  the difference DAG of  $\mathcal{M}^1$ and $\mathcal{M}^2$.
For any two nodes $(X_i, X_j) \in \mathbb{V}^2$, there exists a subset $\mathbb{Z} \subseteq \mathbb{V} \setminus \{X_i, X_j\}$ that diff-separates $X_i$ from $X_j$ in $(\mathcal{G}^1, \mathcal{G}^2)$ if and only if $X_i \notin Pa_{\mathcal{D}}(X_j)$ and $X_j \notin Pa_{\mathcal{D}}(X_i)$.
\end{restatable}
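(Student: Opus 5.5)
The plan is to prove the two directions separately, working with paths in $\mathcal{G}^1\cup\mathcal{G}^2$ together with the definitions of diff-relevant and conditionally diff-relevant paths.

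\emph{Necessity} (``only if''). Suppose, say, $X_i\in Pa_{\mathcal{D}}(X_j)$. By Definition~\ref{def:d_DAG}, $X_i\in Pa_{\mathcal{G}^*}(X_j)$ and the direct effect changes. A change in direct effect means the edge $X_i\to X_j$ is present in at least one of $\mathcal{G}^1,\mathcal{G}^2$. If it vanished in both, the coefficient would be zero in both environments and could not differ. Hence the one-edge path $X_i\to X_j$ lies in $\mathcal{G}^1\cup\mathcal{G}^2$ and contains an edge of $\mathcal{D}$, so it is diff-relevant. This path has no intermediate node, so no $\mathbb{Z}\subseteq\mathbb{V}\setminus\{X_i,X_j\}$ can block it. Therefore no diff-separating set exists. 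The case $X_j\in Pa_{\mathcal{D}}(X_i)$ is symmetric.

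\emph{Sufficiency} (``if''). Assume neither node is a $\mathcal{D}$-parent of the other, and without loss of generality let $X_j$ not precede $X_i$ in the shared topological order guaranteed by the shared-ordering assumption. I would construct the set explicitly, using the PC-style candidate
$$\mathbb{Z}=Pa_{\mathcal{G}^1\cup\mathcal{G}^2}(X_j)\setminus\{X_i\}.$$
Three checks are needed.
\begin{enumerate}
\item Every path from $X_i$ to $X_j$ entering $X_j$ through a parent $P\neq X_i$ has $P$ as a non-collider, so it is blocked.
\item Paths leaving $X_j$ through a child must reach $X_i$, which is not a descendant of $X_j$. Hence they contain a collider that is a descendant of $X_j$. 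Since $\mathbb{Z}$ consists of non-descendants of $X_j$, this collider has no descendant in $\mathbb{Z}$, and the path is blocked.
\item The direct edge $X_i\to X_j$, if it is present, is not in $\mathcal{D}$. The remaining question is whether it is diff-relevant through an intermediate node, but it has none. Nor is it conditionally diff-relevant: its only collider-type feature is condition~3 (the endpoint $X_j$ has an incoming edge along the path), and that would require $X_j$ to have a descendant in $\mathbb{Z}$, which is impossible since $\mathbb{Z}\subseteq Pa(X_j)$.
\end{enumerate}
So every path other than the direct edge is blocked, and the direct edge is neither diff-relevant nor conditionally diff-relevant. Hence $\mathbb{Z}$ diff-separates $X_i$ from $X_j$.

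The main obstacle is the subtlety that diff-separation does not require blocking \emph{all} paths, only the relevant ones. Since the direct edge cannot be blocked, the argument hinges on showing that it is never conditionally diff-relevant for the chosen $\mathbb{Z}$. I expect this to require care with condition~2 applied to the endpoints, which is why I choose $\mathbb{Z}$ among non-descendants of $X_j$. A second concern is whether the edge could count as diff-relevant because an endpoint has an ancestor with an incoming $\mathcal{D}$-edge. I would read ``non-endpoint node'' literally so that this does not apply, and I would verify that reading against the definition first.
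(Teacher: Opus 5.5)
Your necessity direction is the paper's argument: the direct edge is diff-relevant and has no internal node to condition on. Your remark that a $\mathcal{D}$-edge must actually be present in $\mathcal{G}^1$ or $\mathcal{G}^2$, so that the one-edge path exists in the union, adds a little rigor.

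For sufficiency you take a genuinely different and more substantive route. The paper considers the family $\mathcal{F}_{ij}$ of sets blocking all relevant paths and takes a minimal element. Its only justification that $\mathcal{F}_{ij}\neq\emptyset$ is ``finiteness'' plus the absence of a direct $\mathcal{D}$-edge, which does not exhibit any set, and minimality is never used. Your explicit choice $\mathbb{Z}=Pa_{\mathcal{G}^1\cup\mathcal{G}^2}(X_j)\setminus\{X_i\}$, for the later node $X_j$, is the local-Markov set of the union graph. That graph is acyclic because it is a subgraph of $\mathcal{G}^*$. Your set blocks every path except the possible edge $X_i\to X_j$, and that edge satisfies neither relevance notion. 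This supplies the existence argument the paper omits and gives a concrete separating set, without using linearity. The WLOG on the ordering is legitimate because both relevance notions are symmetric in the endpoints (Proposition~\ref{lemma:diff_symetric}); say so in one line.

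On the concern you flagged, your stated rationale is misplaced. Choosing $\mathbb{Z}$ among non-descendants of $X_j$ disposes of condition~3, and of the colliders in your step~2. It would not help with condition~2 if endpoints counted as ``non-colliders''. In that case $X_j$ is joined to every $P\in\mathbb{Z}$ by the trivially active edge $P\to X_j$, so any such $P$ with an incoming $\mathcal{D}$-edge would make the direct edge conditionally diff-relevant.

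Under that reading the proposition is actually false. Take $X_i\to P\to X_j$ and $X_i\to X_j$, with only $X_i\to P$ in $\mathcal{D}$. The diff-relevant path through $P$ forces $P\in\mathbb{Z}$. Then the unblockable direct edge becomes conditionally diff-relevant via the active edge $X_i\to P$, even though $r_{X_iX_j.P}$ equals the invariant coefficient of $X_i\to X_j$.

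So you must read conditions~1--2 as referring to internal nodes of $\pi$. This is the standard meaning of collider/non-collider, and it is corroborated by condition~3 treating endpoints separately. State this reading explicitly and your proof is complete.
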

The second proposition provide a link between diff-separation and the equality of regressions coefficients across environments.

\begin{restatable}{proposition}{diffregcoef}
\label{lemma:diff-reg_coef}
Let $\mathcal{M}^*$ be a linear SCM. Consider two environments $e^1$ and $e^2$, where the respective SCMs $\mathcal{M}^1$ and $\mathcal{M}^2$ are obtained by applying interventions to $\mathcal{M}^*$.
Assume that all exogenous noise distributions are invariant across environments.
We denote  $\mathcal{G}^1$ and $\mathcal{G}^2$ the corresponding causal graphs, and 
$\mathcal{D}=(\mathbb{V}, \mathbb{E})$  the difference DAG of  $\mathcal{M}^1$ and $\mathcal{M}^2$.
For any two nodes $(X_i, X_j) \in \mathbb{V}^2$, there exists a subset $\mathbb{Z} \subseteq \mathbb{V} \setminus \{X_i, X_j\}$ that diff-separates $X_i$ from $X_j$ in $(\mathcal{G}^1, \mathcal{G}^2)$ then 
$ {r_{X_iX_j.\mathbb{Z}}}_{\mathbb{P}^1} ={r_{X_iX_j.\mathbb{Z}}}_{\mathbb{P}^2}$.
\end{restatable}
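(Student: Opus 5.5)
We prove Proposition~\ref{lemma:diff-reg_coef} by writing the regression coefficient as an explicit function of the covariance matrix in each environment, and then showing that every contribution that could differ between environments vanishes under diff-separation. In a linear SCM with coefficient matrix $B^\ell$ and invariant noise covariance $\Omega$, we have $\Sigma^\ell=(I-B^\ell)^{-1}\Omega(I-B^\ell)^{-\top}$. By the trek rule, $\Sigma^\ell_{uv}$ is a sum over treks between $u$ and $v$ in $\mathcal{G}^\ell$ of products of edge coefficients times noise variances.

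The regression coefficient is $r_{X_iX_j.\mathbb{Z}}=\Sigma_{ij\cdot\mathbb{Z}}/\Sigma_{jj\cdot\mathbb{Z}}$, where $\Sigma_{\cdot\cdot\mathbb{Z}}$ denotes the Schur complement. Equivalently, it is the coefficient of $X_j$ in the population regression of $X_i$ on $(X_j,\mathbb{Z})$. So it suffices to show that the numerator and denominator agree across environments, or at least that their ratio does.

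\textbf{Step 1 (splitting the edges).} Split the edge set into edges of $\mathcal{D}$, whose coefficients differ, and the remaining edges, whose coefficients are equal in $\mathcal{G}^1$ and $\mathcal{G}^2$. Using shared topological ordering, treat $\mathcal{G}^1\cup\mathcal{G}^2$ as a common skeleton in which a removed edge simply has coefficient $0$ in one environment. A covariance entry $\Sigma^\ell_{uv}$ changes only if some trek between $u$ and $v$ uses a $\mathcal{D}$-edge. Such a trek necessarily passes through a node with an incoming $\mathcal{D}$-edge, or has such a node as an ancestor on one of its sides.

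\textbf{Step 2 (partial covariance as a sum over paths).} Express the partial covariance $\Sigma_{ij\cdot\mathbb{Z}}$ via the standard path/trek expansion of partial covariances: $\Sigma_{ij\cdot\mathbb{Z}}$ is a sum over paths between $X_i$ and $X_j$ that are d-connecting given $\mathbb{Z}$, with weights that also involve the covariances linking colliders and conditioned non-colliders to $\mathbb{Z}$. Concretely, the cleanest route is to orthogonalize $X_i$ and $X_j$ against $\mathbb{Z}$ and track which terms reference changed coefficients. Each term then falls into one of three classes:
\begin{enumerate}
\item terms from paths carrying a $\mathcal{D}$-edge, or passing through a node downstream of a changed mechanism; these are the diff-relevant paths;
\item terms from paths that are not diff-relevant but whose conditioning-induced weights involve a changed node in $\mathbb{Z}$, through a collider's descendant, a non-collider's active link to $\mathbb{Z}\setminus\pi$, or an endpoint's descendant; these are exactly the three clauses of conditionally diff-relevant paths;
\item all remaining terms.
\end{enumerate}
Diff-separation blocks every path in the first two classes, so those contributions are zero in both environments. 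Terms in the third class involve only unchanged coefficients and unchanged noise variances, so they coincide across environments. Apply the same argument to $\Sigma_{jj\cdot\mathbb{Z}}$, or argue directly on the residualized regression of $X_i$ on $X_j,\mathbb{Z}$ to avoid the denominator.

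\textbf{Main obstacle.} The hard part is Step 2: making rigorous the claim that the partial covariance decomposes path-by-path, with weights depending only on the listed graphical features. Schur complements mix terms nonlinearly, and the denominator $\Sigma_{jj\cdot\mathbb{Z}}$ may genuinely change across environments. I would first try to avoid the ratio altogether. The idea is to show that the population regression of $X_i$ on $(X_j,\mathbb{Z})$ in environment $1$ also yields uncorrelated residuals in environment $2$, by checking the normal equations $\mathrm{Cov}^\ell(X_i-rX_j-\gamma^\top\mathbb{Z},(X_j,\mathbb{Z}))=0$ term by term using the trek rule. If that fails, I would fall back on induction over the topological order, peeling off the last node and using modularity to confirm that only changed mechanisms contribute.
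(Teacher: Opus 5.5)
Your plan follows the paper's own argument: write the coefficient as a residual covariance over a residual variance, expand into path contributions, attribute every environment-dependent term to a diff-relevant or conditionally diff-relevant path, and call the rest invariant. It breaks at the same point as the paper. The failing step is your claim that the leftover (class 3) terms ``involve only unchanged coefficients and unchanged noise variances'', which amounts to identifying classes 1--2 with the clauses of the definition.

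Diff-relevance only inspects \emph{internal} nodes of simple $X_i$--$X_j$ paths. The conditional clauses only fire for $\mathbb{Z}$-nodes that themselves carry an incoming $\mathcal{D}$-edge. A change in the ancestry of an endpoint enters $\Sigma_{ij}$, $\Sigma_{ii}$, $\Sigma_{jj}$ through treks whose top lies above that endpoint. Your Step 1 sees these treks, but they project onto paths on which the change is invisible. Diff-separation need not block those paths, and for a direct edge it cannot.

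This cannot be patched, because the proposition is false. Take $W\to X$, $C\to X$, $C\to Y$, $X\to Y$, with all coefficients and noise variances equal to $1$, except that $W\to X$ has coefficient $1$ in $e^1$ and $2$ in $e^2$. Then $\mathcal{D}=\{W\to X\}$. The only paths between $X$ and $Y$ are $X\to Y$ and $X\leftarrow C\to Y$; neither contains a $\mathcal{D}$-edge, and the only internal node $C$ is a root. For $\mathbb{Z}=\emptyset$ no conditional clause can apply, so $\emptyset$ diff-separates $X$ and $Y$ in either order. Yet $\mathrm{Var}(X)=3$ vs.\ $6$, $\mathrm{Var}(Y)=7$ vs.\ $10$ and $\mathrm{Cov}(X,Y)=4$ vs.\ $7$. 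So the coefficient is $4/3$ vs.\ $7/6$ when regressing $Y$ on $X$, and $4/7$ vs.\ $7/10$ when regressing $X$ on $Y$.

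The conditioning side has the same blind spot. Take $U\to V\to Y$, $X\to Y$, $Y\to S$, with unit weights and only $U\to V$ changing. Then $\{S\}$ diff-separates $X$ and $Y$. But both coefficients (regressing $Y$ on $X,S$, or $X$ on $Y,S$) equal $1/\mathrm{Var}(Y)$, which moves with $\mathrm{Var}(V)$.

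The paper's proof does not escape this. It rests on an auxiliary lemma asserting that, under diff-separation, the residual covariance and the residual variance are each invariant, and the first example contradicts that. So your worry that $\Sigma_{jj\cdot\mathbb{Z}}$ can genuinely change was well founded. But neither the ratio argument nor your normal-equations fallback can succeed, since the conclusion itself fails.

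A true statement of this type must be specific to the regressed variable and must see changes upstream of the endpoints and of $\mathbb{Z}$. For instance, add a node $I$ with $I\to X_k$ for every $X_k$ having an incoming $\mathcal{D}$-edge. Suppose the regressed variable is d-separated from $I$ given the regressors. In the Gaussian case, the Markov property of the augmented DAG then gives it the same conditional law given the regressors in both environments. Since regression coefficients depend only on second moments, the coefficients then agree for any noise law.
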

The last proposition of this section is trivial and it emphasizes  diff-separation  is symmetric.

\begin{restatable}{proposition}{diffsymetric}
\label{lemma:diff_symetric}
There exists a set $\mathbb{Z}_1$ that diff-separates $X_i$ from $X_j$ if and only if there exists a set $\mathbb{Z}_2$ that diff-separates $X_j$ from $X_i$.
\end{restatable}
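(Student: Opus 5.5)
The plan is to take the same set, $\mathbb{Z}_2 := \mathbb{Z}_1$, and show that each clause of Definition~\ref{def:diff_sep} is invariant under swapping the endpoints. By the symmetry of the roles of $X_i$ and $X_j$, one direction suffices. So assume $\mathbb{Z}_1$ diff-separates $X_i$ from $X_j$; we must show it diff-separates $X_j$ from $X_i$.

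\textbf{Step 1 (reversing paths).} Every path $\pi$ from $X_j$ to $X_i$ in $\mathcal{G}^1\cup\mathcal{G}^2$ is the reversal $\bar\pi$ of a path from $X_i$ to $X_j$, with the same edge set, the same non-endpoint nodes, and the same colliders. Colliders are defined locally by the orientation of the two adjacent edges, so reversal does not change them.

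\textbf{Step 2 (blocking and diff-relevance).} Blocking by $\mathbb{Z}_1$ depends only on which non-colliders lie in $\mathbb{Z}_1$ and which colliders have descendants in $\mathbb{Z}_1$. Hence $\pi$ is blocked if and only if $\bar\pi$ is. Both clauses of diff-relevance refer only to the edges of the path and its non-endpoint nodes, so they are also reversal-invariant.

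\textbf{Step 3 (conditional diff-relevance).} Conditions 1 and 2 refer to colliders, non-colliders, the node set of $\pi$, and active paths given $\mathbb{Z}$. These are unchanged by reversal. Condition 3 is stated as ``one endpoint of $\pi$'', with no designated endpoint. The endpoint carrying an incoming edge along $\pi$ is the same node with the same incident edge in $\bar\pi$, so condition 3 is reversal-invariant as well. Therefore the conditionally diff-relevant paths relative to $\mathbb{Z}_1$ between $X_j$ and $X_i$ are exactly the reversals of those between $X_i$ and $X_j$.

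\textbf{Conclusion.} Steps 2 and 3 show that all of them are blocked, so $\mathbb{Z}_1$ diff-separates $X_j$ from $X_i$.

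The only point needing care is condition 3, since it is the one clause that mentions an endpoint. Read as symmetric, ``one endpoint'' makes the argument go through. If instead it were meant to designate $X_i$ specifically, I would fall back on Proposition~\ref{lemma:diff-set_parents}. Its characterization, $X_i\notin Pa_{\mathcal{D}}(X_j)$ and $X_j\notin Pa_{\mathcal{D}}(X_i)$, is manifestly symmetric in $(i,j)$. Existence of a diff-separating set in one direction is then equivalent to existence in the other, though possibly with $\mathbb{Z}_2\neq\mathbb{Z}_1$, which is all the statement requires.
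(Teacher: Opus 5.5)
Your proposal is correct and takes the same route as the paper: set $\mathbb{Z}_2=\mathbb{Z}_1$ and note that paths, blocking, diff-relevance and conditional diff-relevance (including condition~3, which refers only to ``one endpoint'') are all unchanged when the endpoints are swapped. Your clause-by-clause check spells out what the paper asserts in a single sentence, and the fallback via Proposition~\ref{lemma:diff-set_parents} is never needed, because the definition as written is symmetric.
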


\subsection{Faithfulness assumption for difference graphs discovery}
Traditional constraint-based methods, which aim to discover a causal DAG from observational data, typically rely on the faithfulness assumption. In essence, faithfulness ensures that d-separation in the graph corresponds to statistical independence in the data. Formally, under faithfulness, $X_i$ and $X_j$ are d-separated by $\mathbb{Z}$ if and only if 
 $X_i$ and $X_j$ are conditionally independent given  $\mathbb{Z}$.
However, when constructing a difference graph based on dependence constraints, the faithfulness assumption is not required, as illustrated in Example \ref{ex:1}. 

\begin{figure}[t]
\centering
 \subfigure[$\mathcal{G}^1$]{
 \label{fig:intervention_2}
  \centering
\begin{tikzpicture}[{black, circle, draw, inner sep=0, font=\footnotesize, scale=1.2}]
\tikzset{nodes={draw,rounded corners, thick},minimum height=0.6cm,minimum width=0.6cm}
		\node (X) at (0.75,2) {${X}$};
		\node (Y) at (0,0.67) {${Y}$} ;
		\node (Z) at (1.5,0.67) {${Z}$};
		\node (W) at (0.75,-0.66) {${W}$};
    	\node[draw=none] (xy) at (0.2,1.5) {$a$};
    	\node[draw=none] (xz) at (1.3,1.5) {$b$};
    	\node[draw=none] (yw) at (0.2,-0.1) {$c$};
    	\node[draw=none] (zw) at (1.3,-0.1) {$-\frac{ac}{b}$};
      \draw[->,>=latex] (X)  to (Z);
    	\draw[->,>=latex] (X) -- (Y);
         \draw[->,>=latex] (Z)  to  (W);
		\draw[->,>=latex] (Y) to (W);
    \end{tikzpicture}}
 \subfigure[$\mathcal{G}^2$]{\label{fig:diff1}
 \centering
\begin{tikzpicture}[{black, circle, draw, inner sep=0, font=\footnotesize, scale=1.2}]
\tikzset{nodes={draw,rounded corners, thick},minimum height=0.6cm,minimum width=0.6cm}
		\node (X) at (0.75,2) {${X}$};
		\node (Y) at (0,0.67) {${Y}$} ;
		\node (Z) at (1.5,0.67) {${Z}$};
		\node (W) at (0.75,-0.66) {${W}$};
     	\node[draw=none] (xy) at (0.2,1.5) {$a'$};
    	\node[draw=none] (xz) at (1.3,1.5) {$b'$};
    	\node[draw=none] (yw) at (0.2,-0.1) {$c'$};
    	\node[draw=none] (zw) at (1.3,-0.1) {$d$};
      \draw[->,>=latex] (X)  to (Z);
    	\draw[->,>=latex] (X) -- (Y);
         \draw[->,>=latex] (Z)  to  (W);
		\draw[->,>=latex] (Y) to (W);
    \end{tikzpicture}}
 \subfigure[$\hat{\mathcal{D}}$]{\label{fig:diff2}
\begin{tikzpicture}[{black, circle, draw, inner sep=0, font=\footnotesize, scale=1.2}]
\tikzset{nodes={draw,rounded corners, thick},minimum height=0.6cm,minimum width=0.6cm}
		\node (X) at (0.75,2) {${X}$};
		\node (Y) at (0,0.67) {${Y}$} ;
		\node (Z) at (1.5,0.67) {${Z}$};
		\node (W) at (0.75,-0.66) {${W}$};
      \draw[-,>=latex] (X)  to (Z);
    	\draw[-,>=latex] (X) to (Y);
         \draw[->,>=latex] (Z)  to  (W);
		\draw[->,>=latex] (Y) to (W);
    \end{tikzpicture}}
\caption{Violation of faithfulness but not diff-faithfulness. PC algorithm will not be able to infer $\mathcal{G}^1$, but LDiffPC is able to infer the partially oriented $\hat{\mathcal{D}}$ of the true difference DAG $\mathcal{D}$. }
 \label{fig:graph_faith}
\end{figure}

\begin{example}\label{ex:1}
Consider a linear SCM $\mathcal{M}^*$ and two environments, each defining an SCM, $\mathcal{M}^1$ and $\mathcal{M}^2$, associated with the DAGs $\mathcal{G}^1$ and $\mathcal{G}^2$, shown in Figure~\ref{fig:graph_faith} (a) and (b). The direct causal effects in $\mathcal{M}^1$ and $\mathcal{M}^2$ are indicated on the edges of $\mathcal{G}^1$ and $\mathcal{G}^2$. Additionally, Figure~\ref{fig:graph_faith} (c) illustrates the difference DAG $\mathcal{D}$, which captures changes between $\mathcal{M}^1$ and $\mathcal{M}^2$.

In $\mathcal{G}^1$, there exist two distinct paths between nodes $X$ and $W$ with causal effects that cancel each other out. As a result, a standard independence-based method might incorrectly infer that $X$ and $W$ are conditionally independent, leading to the erroneous removal of the edge between them.

However, if the causal coefficients are slightly modified in the second environment, such that the cancellation no longer holds exactly, the statistical dependencies between $X$ and $W$ may change. The LDiffPC method accounts for such differences and retains the edge between $X$ and $W$, unless conditioning on both intermediate variables, $Y$ and $Z$, confirms the equality of the regression coefficients between $X$ and $W$ is the same across environments.
\end{example}

By directly identifying the difference graph without first reconstructing $\mathcal{G}^1$ and $\mathcal{G}^2$, we avoid relying on the faithfulness assumption, as our approach does not depend on exact path cancellations. This makes it more robust in scenarios where small variations in causal strength could otherwise obscure true dependencies. Additionally, we do not require the minimality assumption, which states that every edge in a DAG corresponds to a dependency in the observed data distribution. By relaxing this constraint, our method remains effective even in complex or over-specified models, without enforcing that the causal graph be in its minimal form.

Although our approach does not rely on traditional constraint-based causal discovery assumptions, it does depend on an alternative assumption that we call \emph{diff-faithfulness}. This assumption is crucial for ensuring that the true difference graph can be identified. Specifically, differential faithfulness posits that, under any intervention, the dependence measure between the intervened variable and its ancestors changes in a detectable manner. This guarantees that causal modifications remain distinguishable across environments.

\begin{assumption}[Diff-faithfulness]\label{assum:interv_faithfulness}
 Let $\mathcal{M}^*$ be an SCM. Consider two environments $e^1$ and $e^2$, where the respective linear SCMs $\mathcal{M}^1$ and $\mathcal{M}^2$ are obtained by applying interventions to $\mathcal{M}^*$.
We denote  $\mathcal{G}^1$ and $\mathcal{G}^2$ the corresponding causal graphs and $\mathbb{P}^1$ and $\mathbb{P}^2$ the associate distributions. 

We assume that 
$$ {r_{X_iX_j.Z}}_{\mathbb{P}^1} ={r_{X_iX_j.Z}}_{\mathbb{P}^2} $$
if and only if 
$\mathbb{Z}$ \textit{diff}-separates $X_i$ from $X_j$.
\end{assumption}

This assumption ensures that causal interventions have detectable effects, enabling the construction of the difference DAG by accurately capturing where and how dependencies vary between environments. 

\begin{figure}
\centering
  \subfigure[$\mathcal{G}^1$]{\label{fig:intervention}
  \centering
\begin{tikzpicture}[{black, circle, draw, inner sep=0, font=\footnotesize, scale=1.2}]
\tikzset{nodes={draw,rounded corners, thick},minimum height=0.6cm,minimum width=0.6cm}
		\node (X) at (0.75,2) {${X}$};
		\node (Y) at (0,0.67) {${Y}$} ;
		\node (Z) at (1.5,0.67) {${Z}$};
		\node (W) at (0.75,-0.66) {${W}$};
    	\node[draw=none] (xy) at (0.2,1.5) {$a$};
    	\node[draw=none] (xz) at (1.3,1.5) {$b$};
    	\node[draw=none] (yw) at (0.2,-0.1) {$c$};
    	\node[draw=none] (zw) at (1.3,-0.1) {$d$};
      \draw[->,>=latex] (X)  to (Z);
    	\draw[->,>=latex] (X) -- (Y);
         \draw[->,>=latex] (Z)  to  (W);
		\draw[->,>=latex] (Y) to (W);
    \end{tikzpicture}}
  \subfigure[$\mathcal{G}^2$]{\label{fig:diff5}
  \centering
\begin{tikzpicture}[{black, circle, draw, inner sep=0, font=\footnotesize, scale=1.2}]
\tikzset{nodes={draw,rounded corners, thick},minimum height=0.6cm,minimum width=0.6cm}
		\node (X) at (0.75,2) {${X}$};
		\node (Y) at (0,0.67) {${Y}$} ;
		\node (Z) at (1.5,0.67) {${Z}$};
		\node (W) at (0.75,-0.66) {${W}$};
    	\node[draw=none] (xy) at (0.2,1.5) {$b$};
    	\node[draw=none] (xz) at (1.3,1.5) {$a$};
    	\node[draw=none] (yw) at (0.2,-0.1) {$d$};
    	\node[draw=none] (zw) at (1.3,-0.1) {$c$};
      \draw[->,>=latex] (X)  to (Z);
    	\draw[->,>=latex] (X) -- (Y);
         \draw[->,>=latex] (Z)  to  (W);
		\draw[->,>=latex] (Y) to (W);
    \end{tikzpicture}}
\subfigure[$\hat{\mathcal{D}}$]{\label{fig:diff4}\centering
\begin{tikzpicture}[{black, circle, draw, inner sep=0, font=\footnotesize, scale=1.2}]
\tikzset{nodes={draw,rounded corners, thick},minimum height=0.6cm,minimum width=0.6cm}
		\node (X) at (0.75,2) {${X}$};
		\node (Y) at (0,0.67) {${Y}$} ;
		\node (Z) at (1.5,0.67) {${Z}$};
		\node (W) at (0.75,-0.66) {${W}$};
      \draw[->,>=latex] (X)  to (Z);
    	\draw[->,>=latex] (X) -- (Y);
         \draw[->,>=latex] (W)  to  (Z);
		\draw[->,>=latex] (W) to (Y);
    \end{tikzpicture}}
\caption{Violation of diff-faithfulness through swapping of coefficients. 
}
 \label{fig:graph_int_faith}
\end{figure}

The diff-faithfulness assumption also implies that identical canceling paths (i.e., violations of faithfulness) cannot persist across both causal graphs, and it restricts swapping causal coefficients between graphs. This concept is illustrated in Figure \ref{fig:graph_int_faith}.

\section{Constraint-based approach for discovering the difference DAG}\label{sec:method}

To construct the difference graph, we propose LDiffPC, a constraint-based algorithm inspired by the PC algorithm~\citep{Spirtes_2000}. Whereas the classical PC algorithm relies on d-separation to recover a causal graph, LDiffPC relies on diff-separation and equality tests of regression coefficients to infer the structure of the difference graph.

Before presenting the main algorithm, we first show that, under the diff-faithfulness assumption, the skeleton of the true difference graph is identifiable from equality tests of regression coefficients. This is established in the following lemma.

\begin{restatable}{lemma}{skeleton}
\label{lemma:interventional_faithfulness_implications_skeleton}
 Let $\mathcal{M}^*$ be an SCM. Consider two environments $e^1$ and $e^2$, where the respective SCMs $\mathcal{M}^1$ and $\mathcal{M}^2$ are obtained by applying interventions to $\mathcal{M}^*$.
We denote $\mathbb{P}^1$ and $\mathbb{P}^2$ the associate distributions, and  $\mathcal{D} = (\mathbb{V}, \mathbb{E})$  the true difference graph between  $\mathcal{M}^1$ and $\mathcal{M}^2$. 
If Assumption~\ref{assum:interv_faithfulness} is satisfied then for all pair of vertices $(X_i,X_j) \in \mathbb{V}^{2}$, $X_i$ and $X_j$ are adjacent in $\mathcal{D}$ if and only if  for any subset $\mathbb{Z} \subseteq \mathbb{V}\backslash\{X_i,X_j\}$, $r_{X_iX_j.\mathbb{Z}\mathbb{P}^1} \neq r_{X_iX_j.\mathbb{Z}\mathbb{P}^2}$. 
\end{restatable}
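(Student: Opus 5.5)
The lemma follows by chaining Assumption~\ref{assum:interv_faithfulness} with Proposition~\ref{lemma:diff-set_parents}. Adjacency of $X_i$ and $X_j$ in $\mathcal{D}$ means $X_i \in Pa_{\mathcal{D}}(X_j)$ or $X_j \in Pa_{\mathcal{D}}(X_i)$. So it suffices to prove the contrapositive form:
$$X_i \notin Pa_{\mathcal{D}}(X_j) \text{ and } X_j \notin Pa_{\mathcal{D}}(X_i) \iff \exists\, \mathbb{Z}\subseteq \mathbb{V}\setminus\{X_i,X_j\} \text{ with } {r_{X_iX_j.\mathbb{Z}}}_{\mathbb{P}^1} = {r_{X_iX_j.\mathbb{Z}}}_{\mathbb{P}^2}.$$
Negating both sides of this equivalence gives exactly the statement of the lemma.

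The proof then runs through a two-step chain of equivalences. By Proposition~\ref{lemma:diff-set_parents}, non-adjacency in $\mathcal{D}$ is equivalent to the existence of some $\mathbb{Z} \subseteq \mathbb{V}\setminus\{X_i,X_j\}$ that diff-separates $X_i$ from $X_j$ in $(\mathcal{G}^1,\mathcal{G}^2)$. By Assumption~\ref{assum:interv_faithfulness}, applied to this same fixed $\mathbb{Z}$, such diff-separation is equivalent to ${r_{X_iX_j.\mathbb{Z}}}_{\mathbb{P}^1} = {r_{X_iX_j.\mathbb{Z}}}_{\mathbb{P}^2}$.

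For the direction ($\Rightarrow$) of the displayed equivalence, Proposition~\ref{lemma:diff-set_parents} supplies a diff-separating set $\mathbb{Z}$. Diff-faithfulness, or alternatively Proposition~\ref{lemma:diff-reg_coef} if one wants only the soundness half, then gives equal regression coefficients for that $\mathbb{Z}$. For the direction ($\Leftarrow$), take a $\mathbb{Z}$ with equal coefficients. The "only if" part of diff-faithfulness says this $\mathbb{Z}$ diff-separates $X_i$ from $X_j$, and Proposition~\ref{lemma:diff-set_parents} then forbids either of the two $\mathcal{D}$-edges.

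One mild point needs checking: diff-separation, as defined, is formulated with an ordered pair "from $X_i$ to $X_j$", while adjacency is symmetric. I would invoke Proposition~\ref{lemma:diff_symetric} to note that existence of a diff-separating set does not depend on the order. The regression-coefficient equality is read with the same ordering as in the assumption, so the quantifiers line up.

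The main obstacle is not technical; everything is carried by the earlier propositions and the assumption. The only care needed is the quantifier bookkeeping. "For any $\mathbb{Z}$, the coefficients differ" must be the negation of "there exists a $\mathbb{Z}$ with equal coefficients". Likewise, the range $\mathbb{Z}\subseteq\mathbb{V}\setminus\{X_i,X_j\}$ must be identical in Proposition~\ref{lemma:diff-set_parents}, in the assumption, and in the lemma.
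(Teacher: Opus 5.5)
Your proposal is correct and matches the paper's proof. Proposition~\ref{lemma:diff-set_parents} makes non-adjacency in $\mathcal{D}$ equivalent to the existence of a diff-separating $\mathbb{Z}$, Assumption~\ref{assum:interv_faithfulness} makes that equivalent to the existence of some $\mathbb{Z}$ with ${r_{X_iX_j\cdot\mathbb{Z}}}_{\mathbb{P}^1} = {r_{X_iX_j\cdot\mathbb{Z}}}_{\mathbb{P}^2}$, and negating gives the lemma; your extra appeal to Proposition~\ref{lemma:diff_symetric} is harmless but unnecessary, since the chain already works with the ordering of $(X_i,X_j)$ held fixed throughout.
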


\begin{figure*}[t]
\centering
 \subfigure[$\mathcal{G}^1$]{
 \label{fig:intervention_2}
  \centering
\begin{tikzpicture}[{black, circle, draw, inner sep=0, font=\footnotesize, scale=1.2}]
		\tikzset{nodes={draw,rounded corners, thick},minimum height=0.6cm,minimum width=0.6cm}
		\node (X) at (0.8,2) {${X}$};
		\node (Y) at (0,0.67) {${Y}$} ;
		\node (Z) at (1.6,1.37) {${Z}$};
		\node (W) at (0.8,-0.66) {${W}$};
        \node (U) at (1.6,0.37) {${U}$};
         \node (M) at (2.4,0.97) {${M}$};
    	\node[draw=none] (xy) at (0.2,1.5) {$a$};
    	\node[draw=none] (xz) at (1.3,1.8) {$b$};
    	\node[draw=none] (yw) at (0.2,-0.1) {$c$};
     \node[draw=none] (yw) at (0.2,-0.1) {$c$};
     \node[draw=none] (zu) at (1.8,0.9) {$d$};
       \node[draw=none] (xm) at (2.1,1.9) {$f$};
        \node[draw=none] (mw) at (2,-0.4) {$g$};
          \node[draw=none] (uw) at (1.3,-0.2) {$l$};
      \draw[->,>=latex] (X)  to (Z);
         \draw[->,>=latex, bend left] (X)  to (M);
               \draw[->,>=latex, bend left] (M)  to (W);
    	\draw[->,>=latex] (X) -- (Y);
         \draw[->,>=latex] (Z)  to  (U);
           \draw[->,>=latex] (U)  to  (W);
		\draw[->,>=latex] (Y) to (W);
    \end{tikzpicture}}
    \hspace{1cm}
 \subfigure[$\mathcal{G}^2$]{\label{fig:diff1}
 \centering
\begin{tikzpicture}[{black, circle, draw, inner sep=0, font=\footnotesize, scale=1.2}]
\tikzset{nodes={draw,rounded corners, thick},minimum height=0.6cm,minimum width=0.6cm}
		\node (X) at (0.8,2) {${X}$};
		\node (Y) at (0,0.67) {${Y}$} ;
		\node (Z) at (1.6,1.37) {${Z}$};
		\node (W) at (0.8,-0.66) {${W}$};
        \node (U) at (1.6,0.37) {${U}$};
         \node (M) at (2.4,0.97) {${M}$};
    	\node[draw=none] (xy) at (0.2,1.5) {$a^{*}$};
    	\node[draw=none] (xz) at (1.3,1.8) {$b$};
    	\node[draw=none] (yw) at (0.2,-0.1) {$c$};
     \node[draw=none] (yw) at (0.2,-0.1) {$c$};
     \node[draw=none] (zu) at (1.8,0.9) {$d^{*}$};
       \node[draw=none] (xm) at (2.1,1.9) {$f^{*}$};
        \node[draw=none] (mw) at (2,-0.4) {$g$};
           \node[draw=none] (uw) at (1.3,-0.2) {$l^{*}$};
      \draw[->,>=latex] (X)  to (Z);
         \draw[->,>=latex, bend left] (X)  to (M);
               \draw[->,>=latex, bend left] (M)  to (W);
    	\draw[->,>=latex] (X) -- (Y);
         \draw[->,>=latex] (Z)  to  (U);
           \draw[->,>=latex] (U)  to  (W);
		\draw[->,>=latex] (Y) to (W);
    \end{tikzpicture}}
        \hspace{1cm}
\subfigure[$\hat{\mathcal{D}}$]{\label{fig:diff2}
\begin{tikzpicture}[{black, circle, draw, inner sep=0, font=\footnotesize, scale=1.2}]
\tikzset{nodes={draw,rounded corners, thick},minimum height=0.6cm,minimum width=0.6cm}
		\node (X) at (0.8,2) {${X}$};
		\node (Y) at (0,0.67) {${Y}$} ;
		\node (Z) at (1.6,1.37) {${Z}$};
		\node (W) at (0.8,-0.66) {${W}$};
        \node (U) at (1.6,0.37) {${U}$};
         \node (M) at (2.4,0.97) {${M}$};
      \draw[-,>=latex] (X)  to (W);
         \draw[-,>=latex, bend left] (X)  to (M);
    	\draw[-,>=latex] (X) -- (Y);
           \draw[-,>=latex] (Z)  to  (U);
    \end{tikzpicture}}
\caption{Example of the error of LDiffPC algorithm with PC adjacency search. 
}
 \label{fig:PC_fail}
\end{figure*}

Beyond recovering the skeleton, it is necessary to show that all v-structures can be correctly identified given access to all equality tests. This is established in the following lemma.

\begin{restatable}{lemma}{vstructure}
\label{lemma:interventional_faithfulness_implications_v_structure}
 Let $\mathcal{M}^*$ be an SCM. Consider two environments $e^1$ and $e^2$, where the respective SCMs $\mathcal{M}^1$ and $\mathcal{M}^2$ are obtained by applying interventions to $\mathcal{M}^*$.
We denote $\mathbb{P}^1$ and $\mathbb{P}^2$ the associate distributions, and  $\mathcal{D} = (\mathbb{V}, \mathbb{E})$  the true difference graph between  $\mathcal{M}^1$ and $\mathcal{M}^2$. 
If Assumption~\ref{assum:interv_faithfulness} is satisfied then for all triplets of vertices $(X_i,X_k,X_j) \in \mathbb{V}^3$ such that $X_i$ and $X_k$ are adjacent, $X_k$ and $X_j$ are adjacent, and $X_i$ and $X_j$ are not adjacent, $X_i\rightarrow X_k \leftarrow X_j$ if and only if for  any subset $\mathbb{Z} \subseteq \mathbb{V}\backslash\{X_i,X_j\}$ such that $X_k\in \mathbb{Z}$, $r_{X_iX_j.\mathbb{Z}\mathbb{P}^1} \neq r_{X_iX_j.\mathbb{Z}\mathbb{P}^2}$.
\end{restatable}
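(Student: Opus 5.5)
The plan is to follow the classical PC argument for v-structures, with d-separation replaced by diff-separation. Assumption~\ref{assum:interv_faithfulness} turns every statement about equality of regression coefficients into a statement about diff-separation. So the lemma becomes a purely graphical claim. Take an unshielded triple $X_i - X_k - X_j$ in $\mathcal{D}$, with $X_i,X_j$ non-adjacent. Then $X_i \rightarrow X_k \leftarrow X_j$ in $\mathcal{D}$ if and only if no set $\mathbb{Z}$ containing $X_k$ diff-separates $X_i$ from $X_j$. Non-adjacency, via Lemma~\ref{lemma:interventional_faithfulness_implications_skeleton} and Proposition~\ref{lemma:diff-set_parents}, guarantees that some diff-separating set exists. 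Throughout I use that orientations in $\mathcal{D}$ are inherited from $\mathcal{G}^*$, so the edges $X_i - X_k$ and $X_k - X_j$ of $\mathcal{D}$ are edges of $\mathcal{G}^1\cup\mathcal{G}^2$ whose coefficients differ.

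\textbf{Direction ($\Rightarrow$).} Suppose $X_i \rightarrow X_k \leftarrow X_j$ in $\mathcal{D}$ and let $\mathbb{Z}\ni X_k$. The path $\pi: X_i \rightarrow X_k \leftarrow X_j$ lies in $\mathcal{G}^1\cup\mathcal{G}^2$ and contains edges of $\mathcal{D}$, so it is diff-relevant. Its only interior node $X_k$ is a collider, and $X_k\in\mathbb{Z}$, so $\mathbb{Z}$ activates $\pi$. Hence $\mathbb{Z}$ violates the first condition of Definition~\ref{def:diff_sep} and does not diff-separate $X_i$ from $X_j$. By diff-faithfulness, $r_{X_iX_j.\mathbb{Z}\mathbb{P}^1}\neq r_{X_iX_j.\mathbb{Z}\mathbb{P}^2}$.

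\textbf{Direction ($\Leftarrow$).} I argue by contraposition. Suppose the triple is not a collider in $\mathcal{D}$, so $X_k$ is a parent in $\mathcal{D}$ of at least one of $X_i, X_j$; the three cases are a chain $X_i\rightarrow X_k\rightarrow X_j$, its reverse, or a fork $X_i\leftarrow X_k\rightarrow X_j$. I need to exhibit a $\mathbb{Z}\ni X_k$ that diff-separates $X_i$ from $X_j$; diff-faithfulness then yields equality of the coefficients. The natural candidate mimics the usual PC separating set, $\mathbb{Z}=Pa_{\mathcal{G}^1\cup\mathcal{G}^2}(X_i)\cup Pa_{\mathcal{G}^1\cup\mathcal{G}^2}(X_j)$ minus the endpoints, restricted to ancestors and augmented to contain $X_k$. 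I would proceed as follows.
\begin{enumerate}
\item Take a set $\mathbb{S}$ that diff-separates $X_i,X_j$, which exists by non-adjacency.
\item Show that $X_k$ is a non-collider on every diff-relevant path through the edges $X_i - X_k$ and $X_k - X_j$; this holds because $X_k$ is a parent of an endpoint in these cases. Consequently, adding $X_k$ to $\mathbb{S}$ only blocks such paths further.
\item Check that adding $X_k$ creates no new active diff-relevant path through colliders that are ancestors of $X_k$. For this I would replace $\mathbb{S}$ by an ancestral-type set in the topological order shared by $\mathcal{G}^1$ and $\mathcal{G}^2$, using $\mathbb{Z}=(\mathrm{An}(X_i)\cup \mathrm{An}(X_j))\setminus\{X_i,X_j\}$. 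Standard d-separation arguments show that ancestral sets never open colliders that lie outside the ancestral set.
\end{enumerate}

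\textbf{Main obstacle.} The hard step is the conditionally diff-relevant paths. Their definition depends on $\mathbb{Z}$ itself, through descendants in $\mathbb{Z}$ that have incoming $\mathcal{D}$-edges. Here $X_k$ is such a node whenever $X_i\rightarrow X_k$ or $X_j\rightarrow X_k$ is an edge of $\mathcal{D}$, so putting it into $\mathbb{Z}$ may render new paths conditionally diff-relevant, for instance via condition 3 if an endpoint has an incoming edge and $X_k$ as a descendant. To handle this I would first verify that in the non-collider cases every such newly relevant path passes through a non-collider lying in the ancestral set, so that it is blocked. If that fails, I would fall back to the argument used for Lemma~\ref{lemma:interventional_faithfulness_implications_skeleton}, which builds the separating set from the parents in $\mathcal{D}$ and $\mathcal{G}^*$. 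I expect this path-by-path verification to be the delicate part of the proof.
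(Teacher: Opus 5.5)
Your reduction to a graphical statement via diff-faithfulness and your ($\Rightarrow$) direction are correct and follow the paper. The paper's proof simply asserts that the middle node of an unshielded triple lies in some diff-separating set if and only if it is a non-collider. The converse, however, is not proved. You end with the conditionally diff-relevant paths as an open ``main obstacle.'' The fallback you mention is also not available: the skeleton lemma only chains Proposition~\ref{lemma:diff-set_parents} with diff-faithfulness and never builds a separating set from parents.

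The missing observation is that nothing has to be added to $\mathbb{S}$. In each non-collider configuration, the path $X_i - X_k - X_j$ lies in $\mathcal{G}^1\cup\mathcal{G}^2$ and consists of $\mathcal{D}$-edges, so it is diff-relevant. Its only interior node $X_k$ is a non-collider, so any diff-separating set must contain $X_k$ in order to block it. Hence the set $\mathbb{S}$ from your step~1 already contains $X_k$. Diff-faithfulness applied to $\mathbb{Z}=\mathbb{S}$ then gives $r_{X_iX_j.\mathbb{S}\mathbb{P}^1}=r_{X_iX_j.\mathbb{S}\mathbb{P}^2}$, which is the contrapositive you need, and the question of new conditionally diff-relevant paths never arises.

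You may want existence of $\mathbb{S}$ without leaning on Proposition~\ref{lemma:diff-set_parents}, whose proof in the paper does not actually exhibit a set. In that case your candidate $\mathbb{Z}_A=(\mathrm{An}(X_i)\cup\mathrm{An}(X_j))\setminus\{X_i,X_j\}$ works, with ancestors taken in the DAG $\mathcal{G}^1\cup\mathcal{G}^2$. Every node on a path active given $\mathbb{Z}_A$ is an ancestor of an endpoint or of $\mathbb{Z}_A$, so every interior node lies in $\mathbb{Z}_A$ and must be a collider. That leaves only a direct edge, or $X_i\to C\leftarrow X_j$ with $C\in\mathbb{Z}_A$, which is cyclic.

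So $\mathbb{Z}_A$ blocks every path except a possible direct edge of $\mathcal{G}^1\cup\mathcal{G}^2$, regardless of relevance. Note that $X_i,X_j$ can be adjacent there through an invariant edge, which your step~3 does not address. That edge is not in $\mathcal{D}$ and has no interior node, so it is not diff-relevant and cannot meet conditions~1 or~2. Condition~3 would require its head endpoint to have a descendant in $\mathbb{Z}_A$, which acyclicity rules out.
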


We refer to the graph obtained after these two steps as the pattern of the difference graph—denoted $pattern(\mathcal{D})$—which includes the correct skeleton with all V-structures oriented.
In the following, we show that the same rules using the PC algorithms to some edges orient edges that are not part of a V-structure are also sound in our setting.

\begin{definition}
Consider a difference graph $\mathcal{D} = (\mathbb{V}, \mathbb{E})$ and its pattern $pattern(\mathcal{D})$. Meek-rules are the following: for $(X_i, X_j, X_k) \in \mathbb{V}^3$,
\begin{enumerate}
    \item[Rule 1:] If $X_i\rightarrow X_k-X_j$ and $X_i$ 
    is in $pattern(\mathcal{D})$ and $X_j$ are not adjacent,  then $X_i\rightarrow X_k \rightarrow X_j$ is in $\mathcal{D}$; 
    \item[Rule 2:]  If there is a directed path from $X_i$ to  $X_j$ and an edge  $X_i - X_j$ in $pattern(\mathcal{D})$  then  $X_i \rightarrow X_j$  in $\mathcal{D}$; 
    \item[Rule 3:] If there are two paths  $X_i -X_l \rightarrow X_j$ and  $X_i -X_m \rightarrow X_j$ and an edge $X_i - X_j$ in  $pattern(\mathcal{D})$ , then $X_j \rightarrow  X_j$.
\end{enumerate}
\end{definition}

\begin{lemma}
    \label{lemma:meek}
Consider a difference graph $\mathcal{D} = (\mathbb{V}, \mathbb{E})$ and its pattern $pattern(\mathcal{D})$. 
    Then, successive applications of Meek-Rules in $pattern(\mathcal{D})$ yield correct orientations.
\end{lemma}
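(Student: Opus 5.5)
The plan is to reduce the statement to the classical soundness result of Meek for the PC algorithm, using only the two facts established just above. By Lemma~\ref{lemma:interventional_faithfulness_implications_skeleton}, $pattern(\mathcal{D})$ has exactly the skeleton of $\mathcal{D}$. By Lemma~\ref{lemma:interventional_faithfulness_implications_v_structure}, every unshielded triple $X_i - X_k - X_j$ is oriented as $X_i\rightarrow X_k\leftarrow X_j$ in $pattern(\mathcal{D})$ if and only if it is a collider in $\mathcal{D}$. Moreover, $\mathcal{D}$ is itself a DAG: its edges are a subset of the edges of $\mathcal{G}^*$ (Definition~\ref{def:d_DAG}), so it inherits acyclicity and the shared topological order. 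Hence $pattern(\mathcal{D})$ is a partially directed graph all of whose directed edges agree with the DAG $\mathcal{D}$, and all unshielded colliders of $\mathcal{D}$ are already oriented. Once this is recorded, no further reference to regression coefficients or diff-separation is needed: the claim is purely graph-theoretic about the DAG $\mathcal{D}$.

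I would then argue by induction on the number of rule applications. The invariant is that every directed edge in the current graph agrees with $\mathcal{D}$, and the skeleton and non-adjacencies are those of $\mathcal{D}$. The base case is the previous paragraph. For the inductive step, I treat each rule separately and derive a contradiction from assuming the opposite orientation in $\mathcal{D}$.

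\textbf{Rule 1.} Suppose $X_i\rightarrow X_k - X_j$ with $X_i,X_j$ non-adjacent, and suppose $X_j\rightarrow X_k$ in $\mathcal{D}$. Then $X_i\rightarrow X_k\leftarrow X_j$ is an unshielded collider in $\mathcal{D}$. By Lemma~\ref{lemma:interventional_faithfulness_implications_v_structure} it would already be oriented in the pattern, contradicting the undirected edge $X_k - X_j$.

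\textbf{Rule 2.} Suppose there is a directed path $X_i\rightarrow\cdots\rightarrow X_j$ and an edge $X_i - X_j$. By the invariant, the path is directed in $\mathcal{D}$. The orientation $X_j\rightarrow X_i$ would then close a directed cycle, contradicting acyclicity of $\mathcal{D}$.

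\textbf{Rule 3.} Read with the evident intended conclusion $X_i\rightarrow X_j$, and with $X_l,X_m$ non-adjacent as in Meek's formulation. Suppose $X_j\rightarrow X_i$ in $\mathcal{D}$. Then $X_l\rightarrow X_j\rightarrow X_i$ forces $X_l\rightarrow X_i$ by Rule 2 (acyclicity), and similarly $X_m\rightarrow X_i$. This makes $X_l\rightarrow X_i\leftarrow X_m$ an unshielded collider in $\mathcal{D}$, which by Lemma~\ref{lemma:interventional_faithfulness_implications_v_structure} would already be oriented, contradicting $X_i - X_l$.

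The main obstacle is not the case analysis, which is standard, but making the base case airtight. This requires three things:
\begin{itemize}
\item checking that Lemma~\ref{lemma:interventional_faithfulness_implications_v_structure} gives exactly the "if and only if" characterization of unshielded colliders, so that non-colliders are never spuriously oriented;
\item fixing the evident typos in the rule statements (Rule 1's non-adjacency condition, the conclusion of Rule 3, and its non-adjacency of $X_l,X_m$) so that they match Meek's rules;
\item stating clearly that $\mathcal{D}$ is acyclic because $\mathcal{D}\subseteq\mathcal{G}^*$.
\end{itemize}
I would also remark that completeness (maximal orientation) is not claimed, only soundness, so Meek's completeness theorem is not needed.
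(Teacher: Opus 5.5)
Your proposal is correct and takes the same route as the paper. The paper also reduces the claim to the soundness part of Meek's Theorem~2, using only that $\mathcal{D}$ is a DAG (a subgraph of $\mathcal{G}^*$) and that $pattern(\mathcal{D})$ has the true skeleton with all v-structures of $\mathcal{D}$ oriented; you write out Meek's rule-by-rule contradiction argument instead of citing it. Restoring Meek's original formulations is needed, not cosmetic: Rule~3 as printed, without non-adjacency of $X_l$ and $X_m$, is not sound, so the reduction to Meek goes through only with the corrected rules.
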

\begin{proof}
As $\mathcal{D}$ is a DAG and by Lemma~\ref{lemma:interventional_faithfulness_implications_v_structure} all the V-structures are detected in the $pattern(\mathcal{D})$, the proof is identical to the proof in \citep[Theorem 2]{Meek_1995}.
\end{proof}

Notice that the three lemmas above naturally suggest a naive algorithm that is trivially sound. We refer to this procedure as the brute-force version of LDiffPC. It performs all possible equality tests over all conditioning sets. This approach follows the same philosophy as the SGS algorithm, which can be viewed as the exhaustive counterpart of the PC algorithm. The pseudocode of the brute-force algorithm is provided in Algorithm~\ref{algo:BFIDI} in Appendix~\ref{app:BF}. 

In classical causal discovery, the PC algorithm~\citep{Spirtes_2000, Colombo_2014} was introduced to improve the efficiency of the first two steps of the SGS algorithm. It achieves this by: (i) considering conditioning sets of increasing size, from $0$ up to $p-2$; (ii) restricting conditioning sets to adjacent nodes, thereby reducing the number of tests; and (iii) storing separation sets, which are later used to orient v-structures efficiently. In the final step, similarly to SGS, the algorithm further orients edges using the rules introduced by \cite{Meek_1995}. 

In the following, we show how the brute-force version can be optimized to obtain LDiffPC, in analogy with how PC improves upon SGS. Our algorithm begins by constructing the skeleton: we start from a fully connected undirected graph and iteratively remove edges based on equality tests of regression coefficients. However, the PC algorithm cannot be directly adapted to the problem of difference DAG inference. The following example illustrates this limitation.

\begin{example}
Consider two SCMs, $\mathcal{M}^{1}$ and $\mathcal{M}^{2}$, defined over the variables $X, Y, Z, W, U,$ and $M$. Their corresponding causal graphs, $\mathcal{G}^1$ and $\mathcal{G}^2$, are depicted in Figure~\ref{fig:PC_fail}.
We assume interventions on the nodes $Y, U,$ and $M$, meaning that the direct causal relationships have changed for the following pairs: $(X, Y)$,  $(Z, U)$, $(X, M)$.

Our algorithm starts from a fully connected, undirected graph. Figure~\ref{fig:PC_fail}(c) illustrates the result of applying a naive adaptation of the PC algorithm, which restricts conditioning sets to adjacent nodes. However, this approach fails in this setting.
Specifically, the edge between $X$ and $W$ cannot be removed when conditioning only on their adjacent nodes, because $Z$ and $U$—which are essential for separating $X$ and $W$—are no longer adjacent to them. In this case, correctly removing the edge between $X$ and $W$ would require conditioning on $Y, M,$ and either $Z$ or $U$.
\end{example}

This example highlights a fundamental limitation: the adjacency-based conditioning strategy of the PC algorithm is insufficient for difference DAG inference, as it may fail to recover the correct skeleton. Nevertheless, some key ideas from the PC algorithm can still be leveraged. In particular, we can refine the search space by: (i) using a sequential conditioning strategy, where the size of the conditioning sets increases progressively from $0$ to $p-2$; and (ii) storing separation sets, which are later used to orient v-structures and other configurations. 

Algorithm~\ref{algo:IDI} presents the Linear Difference PC (LDiffPC) algorithm, which summarizes the overall procedure. The resulting graph is only partially oriented, as equality tests alone are not sufficient to fully orient all edges. The output is therefore a partially directed difference graph. The following theorem shows that LDiffPC correctly recovers the true partially directed difference graph, including its skeleton, v-structures, and all orientations implied by Meek’s rules.

\begin{algorithm}[t]
\caption{LDiffPC algorithm}
\label{algo:IDI}
\SetKwInOut{Input}{Input}
 \Input{Two datasets $(\mathbf{x}_1^1,\ldots, \mathbf{x}_{n_1}^1), (\mathbf{x}_1^2,\ldots, \mathbf{x}_{n_2}^2)$, a measure of dependence $I$, a dependence test}
 	\KwResult{The difference PDAG $\mathcal{\hat D}$}
 Form the complete undirected graph $\hat{\mathcal{D}}$ on the vertex set $\mathbb{V}$.

 $n \leftarrow 0$
            
    \Repeat{
	all subsets $\mathbb{Z}$  of cardinality $n$  have been tested  }{
     \For{each  pair of adjacent vertices  $X_i - X_j$ in $\hat{\mathcal{D}}$}{ 
    	 \If{for any subset $\mathbb{Z} \subseteq \mathbb{V}\backslash\{X_i,X_j\}$ of cardinality $n$, $r_{X_i, X_j.\mathbb{Z}\mathbb{P}^1} = r_{X_i, X_j.\mathbb{Z}\mathbb{P}^2}$} 
            {  
            
            delete edge $X_i - X_j$ from $\hat{\mathcal{D}}$
                        
            record $\mathbb{Z}$ in $Sepset(X_i,X_j)$ and $Sepset(X_j,X_i)$     }
         } 
                 $n \leftarrow n + 1$
        }

 \For{ each unshielded triplet  $X_i-X_k-X_j$ in $\hat{\mathcal{D}}$}{ 
     	 \If{$X_k\not\in Sepset(X_i,X_j)$}{
         Orient the triplet as $X_i\rightarrow X_k\leftarrow X_j$ in $\hat{\mathcal{D}}$
         } 
}

\Repeat {no edge can be oriented 
}        {Apply Meek-Rules}
		      
\end{algorithm}

\begin{restatable}{theorem}{LDiffPCtheorem}\label{thm1}
  Let $\mathcal{M}^*$ be an SCM. Consider two environments $e^1$ and $e^2$, where the respective SCMs $\mathcal{M}^1$ and $\mathcal{M}^2$ are obtained by applying interventions to $\mathcal{M}^*$.
Let  $\mathcal{D} = (\mathbb{V}, \mathbb{E})$ be the true difference graph between  $\mathcal{M}^1$ and $\mathcal{M}^2$, and its pattern $(\mathcal{D})$. 
Suppose $\mathcal{\hat D}$ is the output of  the algorithm LDiffPC.
Under Assumption~\ref{assum:interv_faithfulness}  and assuming access to perfect conditional equality information for all pairs of variables
in $\mathbb{V}$, 
$\mathcal{\hat D}$ and $\mathcal{D}$ have the same skeleton and all orientations present in $\mathcal{\hat D}$ are also present in $\mathcal{D}$.
\end{restatable}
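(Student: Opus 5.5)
The proof follows the three phases of LDiffPC (skeleton search, v-structure orientation, Meek propagation), reducing each phase to the lemmas already established.

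\emph{Skeleton.} LDiffPC starts from the complete undirected graph. Unlike PC, it tests conditioning sets $\mathbb{Z}\subseteq\mathbb{V}\setminus\{X_i,X_j\}$ of every cardinality $n=0,\dots,p-2$, not only subsets of current adjacencies. Hence, for every pair $(X_i,X_j)$, every set $\mathbb{Z}$ is eventually examined as long as the edge is present. Edge deletion depends only on the oracle answer for the pair $(X_i,X_j)$ itself, not on the current state of $\hat{\mathcal{D}}$, so the order of deletions does not matter.

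If $X_i$ and $X_j$ are adjacent in $\mathcal{D}$, Lemma~\ref{lemma:interventional_faithfulness_implications_skeleton} gives $r_{X_iX_j.\mathbb{Z}\mathbb{P}^1}\neq r_{X_iX_j.\mathbb{Z}\mathbb{P}^2}$ for all $\mathbb{Z}$, so the edge is never deleted. If they are not adjacent, the same lemma gives some $\mathbb{Z}$ with equal coefficients. That $\mathbb{Z}$, or an earlier one, is reached and the edge is removed, with a witness stored in $Sepset$. Therefore $\hat{\mathcal{D}}$ and $\mathcal{D}$ have the same skeleton after the first loop. Every recorded $Sepset(X_i,X_j)$ is a set with equal regression coefficients, so by Assumption~\ref{assum:interv_faithfulness} it diff-separates $X_i$ from $X_j$. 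Proposition~\ref{lemma:diff_symetric} makes the symmetric recording consistent.

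\emph{V-structures.} Take an unshielded triple $X_i - X_k - X_j$ in $\hat{\mathcal{D}}$. By the skeleton step it is also unshielded in $\mathcal{D}$.

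If $X_i\rightarrow X_k\leftarrow X_j$ in $\mathcal{D}$, Lemma~\ref{lemma:interventional_faithfulness_implications_v_structure} says every $\mathbb{Z}$ containing $X_k$ yields unequal coefficients. The recorded $Sepset(X_i,X_j)$ yields equal coefficients, so it cannot contain $X_k$, and the algorithm orients the collider.

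Conversely, if the triple is not a collider in $\mathcal{D}$, the lemma gives some $\mathbb{Z}\ni X_k$ with equal coefficients. I must show that the particular stored set contains $X_k$; this is the main obstacle. Lemma~\ref{lemma:interventional_faithfulness_implications_v_structure} only speaks of existence, while the algorithm uses the first separating set found. My plan is to argue through diff-separation. For a non-collider $X_k$ adjacent in $\mathcal{D}$ to both endpoints, the paths $X_i - X_k - X_j$ traced in $\mathcal{G}^1\cup\mathcal{G}^2$ use $\mathcal{D}$-edges and are therefore diff-relevant. Definition~\ref{def:diff_sep} then requires $X_k$ to lie in every diff-separating set, because such a path can only be blocked at its non-collider $X_k$. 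Combined with the previous paragraph, every separating set contains $X_k$. I would check carefully that an edge of $\mathcal{D}$ is an edge of $\mathcal{G}^1\cup\mathcal{G}^2$, which holds since $\mathcal{D}$-edges come from $\mathcal{G}^*$ and are present in at least one environment. Hence $\hat{\mathcal{D}}$ after this step equals $pattern(\mathcal{D})$.

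\emph{Meek rules.} Applying Lemma~\ref{lemma:meek} to $pattern(\mathcal{D})$ shows that every orientation added by the repeated Meek rules is present in $\mathcal{D}$. Combining the three steps gives the theorem: $\hat{\mathcal{D}}$ has the skeleton of $\mathcal{D}$, and all of its orientations agree with $\mathcal{D}$.
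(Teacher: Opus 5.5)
Your proposal is correct and follows the same three-phase argument as the paper: skeleton via Lemma~\ref{lemma:interventional_faithfulness_implications_skeleton}, colliders via Lemma~\ref{lemma:interventional_faithfulness_implications_v_structure}, and propagation via Lemma~\ref{lemma:meek}. Your one addition is in the v-structure step, where you show that every equality set, and hence the stored $Sepset(X_i,X_j)$, must contain a non-collider $X_k$, because the two-edge path $X_i - X_k - X_j$ is diff-relevant and can only be blocked at $X_k$; the paper's proof asserts without argument that the lemma's statement about the existence of a separating set containing $X_k$ carries over to the particular recorded set, and your argument supplies that step.
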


\section{Testing conditional equalities} \label{sec:perm}

 In practice, we rely on the regression coefficient
 $r_{XY.\mathbb{Z}\mathbb{P}^e}$ as our measure of conditional equality. 
 To implement this approach, we assume we have two samples, $(\mathbf{x}_1^1,\ldots, \mathbf{x}_{n_1}^1)$ and $(\mathbf{x}_1^2,\ldots, \mathbf{x}_{n_2}^2)$, drawn from two SCMs $\mathcal{M}^1$ and $\mathcal{M}^2$, respectively, where $\mathbf{x}^\ell_i \in \mathbb{R}^p$ for $\ell \in \{1,2\}$. We aim to test whether the regression coefficients between pairs of variables, conditional on a set of variables, are equivalent across the two populations.

 We first calculate the observed difference in conditional regression coefficients: for $i,j \in \{1,\ldots,p\}$ and $\mathbb{Z} \subseteq \{X_1,\ldots,X_p\}\setminus\{X_i,X_j\}$,
 $$
 \hat{\Delta}_{\mathrm{obs}}(X_i,X_j \mid \mathbb{Z})
 =
 \hat r_{X_i X_j \cdot \mathbb{Z}\mathbb{P}^1}
 -
 \hat r_{X_i X_j \cdot \mathbb{Z}\mathbb{P}^2}.
 $$

To assess the significance of this observed difference, we employ a permutation test.
We start by combining the samples from both datasets, creating a single, pooled dataset. Then, we randomly permute the entries in this combined dataset and split it back into two groups of sizes $n_1$ and $n_2$, maintaining the original sample sizes for each population.
  For each permutation $\pi$, we calculate the difference in regression coefficient for the resampled datasets, which gives the value of $\hat{\Delta}^\pi_{\text{perm}}$. 
  We repeat this procedure over $N_r$ different permutations, yielding a distribution of permuted regression coefficient differences. 
  Finally, we calculate the p-value by determining the proportion of these permuted differences that are as extreme or more extreme than the observed difference, as follows:
$$\widehat{\text{p-value}} = \frac{\sum_{\pi = 1}^{N_r} {1}_{\hat{\Delta}^\pi_{\text{perm}} \geq \hat{\Delta}_{\mathrm{obs}}}}{N_r}.$$

This p-value allows us to evaluate whether the observed regression coefficient difference is statistically significant, supporting robust inference of causal differences across the two populations.
Another approach to testing the equality of the regression coefficients is to use hypothesis testing based on the F-test, as introduced in \cite{wang2018direct}.

\section{Conclusion}\label{sec:conc}

In this study, we introduced the LDiffPC method for causal discovery, designed to infer difference DAGs for linear SCMs. We proved that LDiffPC is both sound and complete. 
These results highlight its robustness in capturing causal structures without relying on stringent assumptions, providing a powerful and generalizable approach for researchers and practitioners in causal inference. 
Thus, beyond its theoretical guarantees, LDiffPC paves the way for more flexible causal discovery methods. Many scientific and medical analyses rely on difference DAGs, and LDiffPC offers a principled way to infer these structures while maintaining statistical rigor. 

This work also opens new research directions. For example, by relaxing topological ordering constraints, we may gain deeper insights into causal relationships, although this remains a challenging problem.
Another promising extension is the incorporation of latent variables, following the spirit of FCI, but adapted to the inference of difference graphs. These perspectives reinforce the potential impact of LDiffPC, not only as a theoretical contribution but also as a practical tool for causal discovery in complex environments.

\bibliography{references}

\appendix
\newpage

\onecolumn

\begin{center}
{\LARGE\bfseries Constraint-based difference graph discovery
in a linear setting\\[0.5em]
(Supplementary Material)\par}
\end{center}

\vspace{2em}

\section{Proof of Lemma \ref{lemma:diff-set_parents}} \label{proof:lemma}
 \diff*

\begin{proof}
We prove both directions.

$(\Rightarrow)$ Suppose that there exists a set
$\mathbb{Z}\subseteq \mathbb{V}\setminus\{X_i,X_j\}$ that diff-separates
$X_i$ from $X_j$.
Assume, for contradiction, that $X_i\in Pa_{\mathcal D}(X_j)$. Then
$X_i\to X_j$ is an edge in $\mathcal D$, and hence the direct path
$X_i\to X_j$ is difference-relevant. Since neither $X_i$ nor $X_j$ belongs
to $\mathbb{Z}$, this direct path cannot be blocked by $\mathbb{Z}$,
contradicting the fact that $\mathbb{Z}$ diff-separates $X_i$ from $X_j$.
Thus $X_i\notin Pa_{\mathcal D}(X_j)$. The same argument applied to the edge
$X_j\to X_i$ gives $X_j\notin Pa_{\mathcal D}(X_i)$.

$(\Leftarrow)$ Suppose now that
$X_i\notin Pa_{\mathcal D}(X_j)$ and
$X_j\notin Pa_{\mathcal D}(X_i)$. Then there is no direct edge between
$X_i$ and $X_j$ in $\mathcal D$.
Let $\mathcal F_{ij}$ be the family of sets
$\mathbb{Z}\subseteq \mathbb{V}\setminus\{X_i,X_j\}$ that block all
diff-relevant and conditionally diff-relevant paths between $X_i$ and $X_j$
relative to $\mathbb{Z}$. Since there is no direct diff-relevant path between
$X_i$ and $X_j$, every relevant path that must be blocked is non-direct.
Thus, by the finiteness of $\mathcal{G}^1\cup\mathcal{G}^2$, there exists at
least one set in $\mathcal F_{ij}$.
Let $\mathbb{Z}^\star$ be a minimal element of $\mathcal F_{ij}$. By
construction, $\mathbb{Z}^\star$ blocks all diff-relevant paths between
$X_i$ and $X_j$, and all conditionally diff-relevant paths relative to
$\mathbb{Z}^\star$. Therefore, by Definition~\ref{def:diff_sep},
$\mathbb{Z}^\star$ diff-separates $X_i$ from $X_j$.

Therefore, there exists a set
$\mathbb{Z}\subseteq\mathbb{V}\setminus\{X_i,X_j\}$ that diff-separates
$X_i$ from $X_j$ if and only if
$X_i\notin Pa_{\mathcal D}(X_j)$ and
$X_j\notin Pa_{\mathcal D}(X_i)$.
\end{proof}

\begin{lemma}[Environment-dependent path contributions]
\label{lemma:env_dependent_paths}
Let $\mathcal{M}^1$ and $\mathcal{M}^2$ be two linear SCMs with invariant exogenous noise distributions, and let $\mathcal{D}$ be their difference DAG. Fix two nodes $X_i,X_j$ and a conditioning set $\mathbb{Z}\subseteq \mathbb{V}\setminus\{X_i,X_j\}$. If either the residual covariance
$$
\operatorname{Cov}_{\mathbb{P}^\ell}
\left(R_{X_i}^{\mathbb{Z}},R_{X_j}^{\mathbb{Z}}\right)
$$
or the residual variance
$$
\operatorname{Var}_{\mathbb{P}^\ell}
\left(R_{X_i}^{\mathbb{Z}}\right)
$$
contains a contribution that differs between $\mathbb{P}^1$ and $\mathbb{P}^2$, then this contribution is induced by a path between $X_i$ and $X_j$ that is either diff-relevant or conditionally diff-relevant relative to $\mathbb{Z}$.
\end{lemma}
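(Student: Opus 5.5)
The approach is to expand the residual covariance and variance in a linear SCM as sums of path (trek) contributions, and then check, term by term, that every environment-dependent term is attached to a path of the type singled out in the definitions of diff-relevance.

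\textbf{Step 1: covariance as a sum of treks.} Write each environment in reduced form $X = (I-B^\ell)^{-1}\varepsilon$. Here $B^\ell$ is the coefficient matrix of $\mathcal{M}^\ell$, and by the shared topological ordering its support lies inside that of $\mathcal{G}^*$. Since the noise covariance $\Omega$ is invariant, $\Sigma^\ell=(I-B^\ell)^{-1}\Omega(I-B^\ell)^{-\top}$. The trek rule then gives $\Sigma^\ell_{uv}=\sum_{\tau}\omega_{\mathrm{top}(\tau)}\prod_{e\in\tau}b^\ell_e$, where $\tau$ ranges over treks between $u$ and $v$ in $\mathcal{G}^1\cup\mathcal{G}^2$ (an edge absent in $\mathcal{G}^\ell$ simply has $b^\ell_e=0$). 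A monomial differs across environments exactly when it contains an edge of $\mathcal{D}$.

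\textbf{Step 2: residual quantities.} Expand the residual quantities through the Schur complement:
\[
\operatorname{Cov}(R^{\mathbb{Z}}_{X_i},R^{\mathbb{Z}}_{X_j})=\Sigma_{ij}-\Sigma_{i\mathbb{Z}}\Sigma_{\mathbb{Z}\mathbb{Z}}^{-1}\Sigma_{\mathbb{Z}j},
\]
and similarly for the variance. Expanding $\Sigma_{\mathbb{Z}\mathbb{Z}}^{-1}$ by the adjugate (or by a Neumann/Gessel--Viennot-type expansion), each term is a product of trek monomials that chain $X_i\to z_1\cdots z_k\to X_j$. Standard results (Sullivant--Talaska--Draisma; also the graphical argument behind d-separation in linear SEMs) say that after cancellation the surviving terms correspond to concatenations of treks forming paths from $X_i$ to $X_j$ that are \emph{active given $\mathbb{Z}$}: colliders arise where two treks meet at a $\mathbb{Z}$-node or an ancestor of one, and non-colliders in $\mathbb{Z}$ cancel. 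So every term is indexed by an active path $\pi$ together with its decoration, namely the treks into conditioning nodes that serve colliders, the variances of nodes on $\pi$, and the normalizing factors from $\Sigma_{\mathbb{Z}\mathbb{Z}}$.

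\textbf{Step 3: case split on a differing term.} Fix a term that differs between $\mathbb{P}^1$ and $\mathbb{P}^2$. Then some factor carries an edge $e\in\mathcal{D}$; let $v$ be the head of $e$.
\begin{itemize}
\item If $e$ lies on $\pi$, then $\pi$ is diff-relevant.
\item If $e$ enters the variance of a non-endpoint node on $\pi$, for example through the trek top of a fork, so that some interior node has an ancestor $v$ with an incoming $\mathcal{D}$-edge, then $\pi$ is again diff-relevant by the second clause.
\item Otherwise $\pi$ is not diff-relevant, and $e$ appears only in the decoration. There are three ways this can happen, matching the three clauses of conditional diff-relevance:
 \begin{enumerate}
 \item $e$ lies on the directed path from a collider of $\pi$ to its conditioned descendant, so that descendant (or an intermediate node) has an incoming $\mathcal{D}$-edge;
 \item $e$ lies on an active side path linking a non-collider of $\pi$ to a node of $\mathbb{Z}\setminus\pi$, which enters via $\Sigma_{\mathbb{Z}\mathbb{Z}}^{-1}$;
 \item $e$ lies below an endpoint whose edge on $\pi$ points into it and which has a conditioned descendant; this is the endpoint analogue of case 1, and it is what produces the variance term.
 \end{enumerate}
\end{itemize}
For the residual variance, the same analysis applies to paths from $X_i$ to itself through $\mathbb{Z}$. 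I would reduce it to the covariance case by noting that such a term affects $r$ only in combination with a covariance term along a path to $X_j$.

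\textbf{Main obstacle.} The hard part is Step 2: making precise that the Schur-complement expansion decomposes into terms attached to individual active paths, and that decorations attach only at colliders, conditioned side paths, or endpoints. A careful version needs a combinatorial identity, such as the trek-separation / Gessel--Viennot determinant formula, to control cancellations. I would first prove the decomposition for $|\mathbb{Z}|=1$ by direct computation, then proceed by induction on $|\mathbb{Z}|$ using the recursive partial-covariance formula
\[
\sigma_{ij\cdot \mathbb{Z}\cup\{z\}}=\sigma_{ij\cdot\mathbb{Z}}-\sigma_{iz\cdot\mathbb{Z}}\sigma_{zj\cdot\mathbb{Z}}/\sigma_{zz\cdot\mathbb{Z}},
\]
checking at each step that newly introduced differing factors fall into one of the cases above.
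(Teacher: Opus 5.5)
Your strategy is the same as the paper's: expand into path/trek contributions, then match every environment-dependent term to a clause of diff-relevance or conditional diff-relevance. The case split in Step 3 has a hole, though, and it cannot be filled, because the lemma is false as stated.

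You never handle a $\mathcal{D}$-edge that feeds the variance of an \emph{endpoint}. Your second bullet covers only non-endpoint nodes, and sub-case 3 covers only $\mathcal{D}$-edges \emph{below} an endpoint that has a conditioned descendant. Yet a directed path out of $X_i$ is weighted by $\operatorname{Var}(X_i)$ (for $A\to B$, $\operatorname{Cov}(A,B)=c\operatorname{Var}(A)$), and $\operatorname{Var}(R^{\mathbb{Z}}_{X_i})$ depends on the whole ancestry of $X_i$. In your Step 1 expansion these are treks such as $(W\to A,\ W\to A\to B)$, which are not paths between the endpoints at all.

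Concretely, let $W\to A\to B$ with $A=b_\ell W+\varepsilon_A$, $B=A+\varepsilon_B$, unit noise variances, $b_1=1$, $b_2=2$. Then $\mathcal{D}$ consists of $W\to A$ only; take $\mathbb{Z}=\emptyset$. The only path between $A$ and $B$ is $A\to B$. It contains no $\mathcal{D}$-edge and no interior node, and with $\mathbb{Z}=\emptyset$ none of the three conditional clauses can fire. Nevertheless $\operatorname{Var}(A)=\operatorname{Cov}(A,B)$ equals $2$ under $\mathbb{P}^1$ and $5$ under $\mathbb{P}^2$. (Even simpler: if $W\to X_i$ is in $\mathcal{D}$ and $X_j$ lies in another component, $\operatorname{Var}(X_i)$ changes with no path between $X_i$ and $X_j$ at all.) Your induction starts at $|\mathbb{Z}|=1$, so it skips exactly the base case where this happens.

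Your proposed reduction of variance terms to covariance terms does not save the argument, because the ratio changes too. With $X_i=B$, $X_j=A$, the coefficient $\operatorname{Cov}(B,A)/\operatorname{Var}(B)$ is $2/3$ under $\mathbb{P}^1$ and $5/6$ under $\mathbb{P}^2$, although $\emptyset$ diff-separates $B$ from $A$. So Proposition~\ref{lemma:diff-reg_coef}, which rests on this lemma, fails as well.

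The paper's own proof has the same blind spot. It asserts that along a non-diff-relevant path ``neither the path coefficients nor the distributions of internal nodes change'' and concludes that any change must come from conditioning, silently treating the endpoint distributions as invariant. A correct version needs amended definitions, e.g.\ requiring $\mathbb{Z}$ to screen off changed ancestors of the endpoints (here $\mathbb{Z}=\{W\}$ restores invariance). Proposition~\ref{lemma:diff-set_parents} would then have to be rechecked.

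Independently, Step 2 leans on results that do not give you what you need. Trek separation is a rank criterion, proved via a Gessel--Viennot expansion of minors into systems of non-intersecting treks. It is not a decomposition of $\sigma_{ij\cdot\mathbb{Z}}$ into terms attached to single active paths. Term-wise ``contributions'' also depend on the representation. Writing $\sigma_{ij\cdot\mathbb{Z}}=\det\Sigma_{\{i\}\cup\mathbb{Z},\{j\}\cup\mathbb{Z}}/\det\Sigma_{\mathbb{Z}\mathbb{Z}}$, a $\mathcal{D}$-edge into a node of $\mathbb{Z}$ that is disconnected from $X_i$ and $X_j$ changes both determinants but not the ratio. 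So you would need a precise notion of contribution before the statement can be proved at all.
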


\begin{proof}
Since the SCMs are linear, each variable can be written as a linear combination of the exogenous noise variables. Therefore, covariances, residual covariances, residual variances, and regression coefficients can be expressed as sums of path contributions. Each contribution is determined by products of structural coefficients along the relevant paths and by exogenous noise variances.

Because the exogenous noise distributions are invariant across environments, a contribution can differ between $\mathbb{P}^1$ and $\mathbb{P}^2$ only if the contribution depends on a structural coefficient that differs across environments. Such coefficients are precisely those represented by incoming edges in the difference DAG $\mathcal{D}$.

We consider how such an environment-dependent coefficient can enter the residual covariance or residual variance.

First, suppose a path between $X_i$ and $X_j$ contains an edge in $\mathcal{D}$. Then the product of coefficients along this path differs across environments. Hence the path can induce a different contribution to the residual covariance, unless it is blocked by $\mathbb{Z}$. By definition, this path is diff-relevant.

Second, suppose a path between $X_i$ and $X_j$ does not itself contain an edge in $\mathcal{D}$, but contains a non-endpoint node whose distribution changes because one of its ancestors has an incoming edge in $\mathcal{D}$. Then the variation transmitted through this internal node can differ across environments, even though the path coefficients on the path itself are unchanged. Such a change may alter the residual covariance between $X_i$ and $X_j$, or the residual variance of $X_i$ relevant to the coefficient. By definition, this path is again diff-relevant.

It remains to consider paths that are not diff-relevant. For such paths, neither the path coefficients nor the distributions of internal nodes change directly. Therefore, any environment-dependent contribution can only arise through the conditioning operation, i.e. through the way $\mathbb{Z}$ residualizes $X_i$ and $X_j$.

There are three possible graphical ways in which this can occur.

\begin{enumerate}
    \item If the path contains a collider having a descendant in $\mathbb{Z}$ with at least one incoming edge in $\mathcal{D}$, then conditioning on $\mathbb{Z}$ can open the collider path with a strength that depends on the changed mechanism. Hence the path can contribute differently across environments.

    \item If the path contains a non-collider node that is connected, through an active path given $\mathbb{Z}$, to a node in $\mathbb{Z}\setminus \pi$ with at least one incoming edge in $\mathcal{D}$, then conditioning on that node in $\mathbb{Z}$ changes how the non-collider contribution is residualized. Since the conditioning variable has an environment-dependent mechanism, the residualized contribution may differ across environments.

    \item If one endpoint of the path has an incoming edge along the path and has a descendant in $\mathbb{Z}$ with at least one incoming edge in $\mathcal{D}$, then conditioning on that descendant can change the residualized variation of the endpoint. This can affect the residual covariance or residual variance appearing in the regression coefficient.
\end{enumerate}

These three cases are exactly the cases in which a path that is not diff-relevant is conditionally diff-relevant relative to $\mathbb{Z}$. Hence any environment-dependent contribution to either the residual covariance or the residual variance must be induced by a path that is diff-relevant or conditionally diff-relevant relative to $\mathbb{Z}$.
\end{proof}

\diffregcoef*

\begin{proof}
Since the SCMs are linear, the regression coefficient of $X_i$ in the regression of $X_j$ on $X_i$ and $\mathbb{Z}$ can be written as
$$
{r_{X_iX_j\cdot \mathbb{Z}}}_{\mathbb{P}^\ell}
=
\frac{
\operatorname{Cov}_{\mathbb{P}^\ell}
\left(R_{X_i}^{\mathbb{Z}},R_{X_j}^{\mathbb{Z}}\right)
}{
\operatorname{Var}_{\mathbb{P}^\ell}
\left(R_{X_i}^{\mathbb{Z}}\right)
},
\qquad \ell \in \{1,2\},
$$
where $R_{X_i}^{\mathbb{Z}}$ and $R_{X_j}^{\mathbb{Z}}$ denote the residuals obtained after linearly projecting $X_i$ and $X_j$ on $\mathbb{Z}$.

By Lemma~\ref{lemma:env_dependent_paths}, any environment-dependent contribution to either the residual covariance
$$
\operatorname{Cov}_{\mathbb{P}^\ell}
\left(R_{X_i}^{\mathbb{Z}},R_{X_j}^{\mathbb{Z}}\right)
$$
or the residual variance term relevant to the regression coefficient
$$
\operatorname{Var}_{\mathbb{P}^\ell}
\left(R_{X_i}^{\mathbb{Z}}\right)
$$
must arise from a path between $X_i$ and $X_j$ that is either diff-relevant or conditionally diff-relevant relative to $\mathbb{Z}$.

Since $\mathbb{Z}$ diff-separates $X_i$ from $X_j$, all such paths are blocked. Hence no environment-dependent contribution remains in either the numerator or the denominator of the regression coefficient. Therefore,
$$
\operatorname{Cov}_{\mathbb{P}^1}
\left(R_{X_i}^{\mathbb{Z}},R_{X_j}^{\mathbb{Z}}\right)
=
\operatorname{Cov}_{\mathbb{P}^2}
\left(R_{X_i}^{\mathbb{Z}},R_{X_j}^{\mathbb{Z}}\right)
$$
and
$$
\operatorname{Var}_{\mathbb{P}^1}
\left(R_{X_i}^{\mathbb{Z}}\right)
=
\operatorname{Var}_{\mathbb{P}^2}
\left(R_{X_i}^{\mathbb{Z}}\right).
$$

Combining these two equalities gives
$$
{r_{X_iX_j\cdot \mathbb{Z}}}_{\mathbb{P}^1}
=
{r_{X_iX_j\cdot \mathbb{Z}}}_{\mathbb{P}^2}.
$$
\end{proof}

\diffsymetric*

\begin{proof}
We prove one direction; the converse follows by exchanging the roles of $X_i$ and $X_j$.

Assume that there exists a set $\mathbb{Z}_1$ that diff-separates $X_i$ from $X_j$. By Definition~\ref{def:diff_sep}, $\mathbb{Z}_1$ blocks all paths between $X_i$ and $X_j$ that are either diff-relevant or conditionally diff-relevant relative to $\mathbb{Z}_1$.

Since paths between $X_i$ and $X_j$ are the same as paths between $X_j$ and $X_i$, and since both diff-relevance and conditional diff-relevance relative to a fixed conditioning set are symmetric with respect to the endpoints of the path, the same set $\mathbb{Z}_1$ also blocks all diff-relevant and conditionally diff-relevant paths between $X_j$ and $X_i$.

Hence $\mathbb{Z}_1$ diff-separates $X_j$ from $X_i$. Therefore,
\[
\exists \mathbb{Z}_1 : \mathbb{Z}_1 \text{ diff-separates } X_i \text{ from } X_j
\Rightarrow
\exists \mathbb{Z}_2 : \mathbb{Z}_2 \text{ diff-separates } X_j \text{ from } X_i.
\]
The reverse implication follows by the same argument with $X_i$ and $X_j$ exchanged.
\end{proof}

\section{Brute force algorithm}
\label{app:BF}
\begin{algorithm}[h]
\caption{Brute force LDiffPC algorithm}
\label{algo:BFIDI}
\SetKwInOut{Input}{Input}
 \Input{Two datasets $(\mathbf{x}_1^1,\ldots, \mathbf{x}_n^1), (\mathbf{x}_1^2,\ldots, \mathbf{x}_n^2)$, a measure of dependence $I$}
 	\KwResult{The maximally oriented difference graph $\mathcal{\hat D}$}
 Form the complete undirected graph $\hat{\mathcal{D}}$ on the vertex set $\mathbb{V}$.

\For{each  pair of adjacent vertices  $X_i - X_j$ in $\hat{\mathcal{D}}$}{ 
    	 \If{for any subset $\mathbb{Z} \subseteq \mathbb{V}\backslash\{X_i,X_j\}$ $r_{X_i, X_j.\mathbb{Z}\mathbb{P}^1} = r_{X_i, X_j.\mathbb{Z}\mathbb{P}^2}$} 
            {  delete edge $X_i - X_j$ from $\hat{\mathcal{D}}$
     } 
}

\For{each undirected unshielded triplet  $X_i - X_k - X_j$ in $\hat{\mathcal{D}}$}{ 
    	 \If{for all subset $\mathbb{Z} \subseteq \mathbb{V}\backslash\{X_i,X_j\}$ such that $X_k\in \mathbb{Z}$, $r_{X_i, X_j.\mathbb{Z}\mathbb{P}^1} = r_{X_i, X_j.\mathbb{Z}\mathbb{P}^2}$} 
            {  orient the triplet $X_i \rightarrow X_k \leftarrow X_j$ in $\hat{\mathcal{D}}$
     } 
}

\Repeat {no edge can be oriented 
}        {Apply Meek-Rules}
\end{algorithm}

\section{Proof of Proposition \ref{lemma:interventional_faithfulness_implications_skeleton}}\label{app:proofProp1}
\skeleton*

\begin{proof}
By Lemma~\ref{lemma:diff-set_parents}, $X_i$ and $X_j$ are non-adjacent in $\mathcal{D}$ if and only if there exists a set $\mathbb{Z}\subseteq \mathbb{V}\setminus\{X_i,X_j\}$ that diff-separates $X_i$ from $X_j$.
By Assumption~\ref{assum:interv_faithfulness}, this holds if and only if there exists a set $\mathbb{Z}\subseteq \mathbb{V}\setminus\{X_i,X_j\}$ such that
$$
{r_{X_iX_j\cdot\mathbb{Z}}}_{\mathbb{P}^1}
=
{r_{X_iX_j\cdot\mathbb{Z}}}_{\mathbb{P}^2}.
$$
Taking the negation gives that $X_i$ and $X_j$ are adjacent in $\mathcal{D}$ if and only if, for every $\mathbb{Z}\subseteq \mathbb{V}\setminus\{X_i,X_j\}$,
$$
{r_{X_iX_j\cdot\mathbb{Z}}}_{\mathbb{P}^1}
\neq
{r_{X_iX_j\cdot\mathbb{Z}}}_{\mathbb{P}^2}.
$$
\end{proof}

\vstructure*

\begin{proof}
    By Assumption~\ref{assum:interv_faithfulness},
$
{r_{X_iX_j\cdot\mathbb{Z}}}_{\mathbb{P}^1}
=
{r_{X_iX_j\cdot\mathbb{Z}}}_{\mathbb{P}^2}
$
if and only if $\mathbb{Z}$ diff-separates $X_i$ from $X_j$.

Since $X_i$ and $X_j$ are not adjacent, Lemma~\ref{lemma:diff-set_parents} implies that there exists at least one set that diff-separates $X_i$ from $X_j$.

For an unshielded triple $X_i-X_k-X_j$, the middle node $X_k$ belongs to a separating set for $X_i$ and $X_j$ if and only if $X_k$ is a non-collider on the path. Therefore, $X_i\to X_k\leftarrow X_j$ holds if and only if no diff-separating set for $X_i$ and $X_j$ contains $X_k$.

Using Assumption~\ref{assum:interv_faithfulness}, this is equivalent to saying that for every $\mathbb{Z}\subseteq \mathbb{V}\setminus\{X_i,X_j\}$ with $X_k\in\mathbb{Z}$,
$$
{r_{X_iX_j\cdot\mathbb{Z}}}_{\mathbb{P}^1}
\neq
{r_{X_iX_j\cdot\mathbb{Z}}}_{\mathbb{P}^2}.
$$
This proves the claim.
\end{proof}

\newpage
\section{Proof of Theorem \ref{thm1}}\label{app:proofThm}
\LDiffPCtheorem*
\begin{proof}
We prove the result in three steps.

First, consider the skeleton recovered by LDiffPC. The algorithm removes an edge $X_i-X_j$ whenever it finds a set
$\mathbb{Z}\subseteq \mathbb{V}\setminus\{X_i,X_j\}$ such that
$$
{r_{X_iX_j\cdot \mathbb{Z}}}_{\mathbb{P}^1}
=
{r_{X_iX_j\cdot \mathbb{Z}}}_{\mathbb{P}^2}.
$$
Since the algorithm has access to perfect conditional equality information and tests all conditioning sets by increasing cardinality, Lemma~\ref{lemma:interventional_faithfulness_implications_skeleton} implies that an edge is removed if and only if $X_i$ and $X_j$ are not adjacent in the true difference graph $\mathcal{D}$. Hence, after the skeleton phase, $\hat{\mathcal D}$ and $\mathcal D$ have the same skeleton.

Second, consider the orientation of unshielded triples. Let $X_i-X_k-X_j$ be an unshielded triple in the recovered skeleton. Since the skeleton is correct, this is also an unshielded triple in $\mathcal D$. By Lemma~\ref{lemma:interventional_faithfulness_implications_v_structure}, $X_i\to X_k\leftarrow X_j$ is a v-structure in $\mathcal D$ if and only if, for every conditioning set $\mathbb{Z}\subseteq \mathbb{V}\setminus\{X_i,X_j\}$ containing $X_k$,
$$
{r_{X_iX_j\cdot \mathbb{Z}}}_{\mathbb{P}^1}
\neq
{r_{X_iX_j\cdot \mathbb{Z}}}_{\mathbb{P}^2}.
$$
Since $Sepset(X_i,X_j)$ contains the conditioning sets for which equality of regression coefficients was found, Lemma~\ref{lemma:interventional_faithfulness_implications_v_structure} implies that $X_i\to X_k\leftarrow X_j$ is a v-structure in $\mathcal D$ if and only if no separating set recorded in $Sepset(X_i,X_j)$ contains $X_k$. Therefore, the v-structure orientations performed by LDiffPC are sound.

Finally, after all v-structures have been correctly oriented, LDiffPC applies Meek's rules. By Lemma~\ref{lemma:meek}, successive applications of Meek's rules to the pattern of a DAG produce only correct orientations. Therefore every orientation added during this final phase is also present in the true difference DAG $\mathcal D$.

Combining the three steps, $\hat{\mathcal D}$ and $\mathcal D$ have the same skeleton, and every orientation present in $\hat{\mathcal D}$ is also present in $\mathcal D$.
\end{proof}

\end{document}